\documentclass[aps,prx,twocolumn,superscriptaddress,longbibliography,nofootinbib,amsmath,amssymb]{revtex4-2} 
\usepackage[T1]{fontenc}
\usepackage{subcaption} 
\usepackage{physics}
\usepackage{bm}
\usepackage{bbm}
\usepackage[dvipsnames]{xcolor}
\usepackage{hyperref}
\usepackage{dsfont}
\usepackage{mathrsfs}
\usepackage{todonotes}
\usepackage{enumerate}
\usepackage{ragged2e}
\usepackage{amsthm}
\usepackage[normalem]{ulem}
\usepackage{cancel}

\usepackage{tikz}
\usetikzlibrary{calc}
\usetikzlibrary{patterns.meta}
\usetikzlibrary{knots}
\definecolor{mpdocolor}{HTML}{FCDE70}
\definecolor{mpucolor}{HTML}{FCDE70}
\definecolor{mpscolor}{HTML}{E5D9F2}
\definecolor{lcolor}{HTML}{D9EAFD}

\newcommand\mthick{thick}

\newcommand{\MPS}[4]{
	\begin{scope}[shift={(#1)}]
		\draw[fill=#3] (-#2/2,-#2/2) -- (-#2/2,#2/2) -- (#2/2,#2/2) -- (#2/2,-#2/2) -- cycle;
		\draw (0,0) node {#4};
	\end{scope}
}

\newcommand{\MPU}[4]{
	\begin{scope}[shift={(#1)}]
		\draw[fill=#3] (0,0) circle (#2/2);
		\draw (0,0) node {#4};
	\end{scope}
}

\newcommand{\transfermat}[5]{
	\begin{scope}[shift={(#1)}]
		\draw[fill=#4] (-#3/2,-#2/2) -- (-#3/2,#2/2) -- (#3/2,#2/2) -- (#3/2,-#2/2) -- cycle;
		\draw (0,0) node {#5};
	\end{scope}
}

\newcommand{\GcTensor}[6]{
    \begin{scope}[shift={(#1)}]
    \ifnum#5=0
		\draw[thick] (-#2,0) -- (#2,0);
		\draw[thick] (0,#2) -- (0,-#2);
    \fi
    \ifnum#5=-1
		\draw[thick] (0,0) -- (#2,0);
		\draw[thick] (0,#2) -- (0,-#2);
    \fi
    \ifnum#5=1
		\draw[thick] (-#2,0) -- (0,0);
		\draw[thick] (0,#2) -- (0,-#2);
    \fi

    \ifnum#5=2
		\draw[thick] (-#2,0) -- (#2,0);
		\draw[thick] (0,0) -- (0,#2);
    \fi
    \ifnum#5=-2
		\draw[thick] (-#2,0) -- (#2,0);
		\draw[thick] (0,0) -- (0,-#2);
    \fi

    \ifnum#5=3
		\draw[thick] (-#2,0) -- (#2,0);
		\draw[thick] (0,-#2) -- (0,#2);
    \fi
        \draw[thick, fill=#6, rounded corners=2pt] (-#3,-#3) rectangle (#3,#3);
		\draw (0,0) node {\scriptsize #4};
	\end{scope}
}

\newcommand{\projector}[6]{
\begin{scope}[shift={(#1)}]
	\ifnum#2 = 1
		\draw[\mthick](-#3,#4) -- (0, #4);
		\draw[\mthick](-#3,-#4) -- (0,-#4);
		\draw[\mthick](0,0) -- (#3,0);
	\fi
	\ifnum#2 = -1
		\draw[\mthick](#3,#4) -- (0, #4);
		\draw[\mthick](#3,-#4) -- (0,-#4);
		\draw[\mthick](0,0) -- (-#3,0);
	\fi
	\draw[ thick, fill=#6, rounded corners=2pt] (-#3*0.5,-#4*1.2) rectangle (#3*0.5,#4*1.2);
		\draw (0,0) node {\scriptsize #5};
\end{scope}
}

\makeatletter 
\renewcommand\onecolumngrid{
\do@columngrid{one}{\@ne}%
\def\set@footnotewidth{\onecolumngrid}
\def\footnoterule{\kern-6pt\hrule width 1.5in\kern6pt}%
}

\renewcommand\twocolumngrid{
        \def\footnoterule{
        \dimen@\skip\footins\divide\dimen@\thr@@
        \kern-\dimen@\hrule width.5in\kern\dimen@}
        \do@columngrid{mlt}{\tw@}
}%

\def\amsbb{\use@mathgroup \M@U \symAMSb}
\makeatother

\newtheorem{theorem}{Theorem}[section]
\newtheorem{lemma}[theorem]{Lemma}
\newtheorem{proposition}[theorem]{Proposition}
\newtheorem{corollary}{Corollary}[section]
\newtheorem{definition}{Definition}

\newcommand{\ie}{\emph{i.e.\@} }

\newcommand{\id}{\mathds{1}}

\DeclareMathOperator{\arctanh}{arctanh}

\def\cell{0.6} 

\usepackage{xcolor}
\definecolor{yuhan}{rgb}{0.9, 0, 0.5}

\begin{document}
\title{Structure of matrix product locally purifiable density operators}
\author{Yale Yauk}
\email{yale.yauk@mpq.mpg.de}
\affiliation{Max-Planck-Institut f\"ur Quantenoptik, 85748 Garching, Germany}
\affiliation{Munich Center for Quantum Science and Technology (MCQST), 80799 M\"unchen, Germany}
\author{Yuhan Liu}
\affiliation{Max-Planck-Institut f\"ur Quantenoptik, 85748 Garching, Germany}
\affiliation{Munich Center for Quantum Science and Technology (MCQST), 80799 M\"unchen, Germany}
\author{Ignacio Cirac}
\affiliation{Max-Planck-Institut f\"ur Quantenoptik, 85748 Garching, Germany}
\affiliation{Munich Center for Quantum Science and Technology (MCQST), 80799 M\"unchen, Germany}

\begin{abstract}
Tensor network methods provide powerful analytical and numerical tools for characterizing quantum phases of matter. While the mathematical structure of matrix product states (MPS) is well understood through the MPS fundamental theorem, an analogous understanding for mixed-state tensor networks remains largely absent: if two purification tensors generate the same density matrix, how are they related? In this work, we initiate the study of a fundamental theorem for matrix product locally purifiable density operators (LPDOs) and focus on sequentially generated LPDOs (sLPDOs), a broad subclass admitting an interpretation in terms of successive applications of quantum channels on an initial state. We prove that, under suitable invertibility or cyclic conditions, two sLPDO representations generate the same density matrix for arbitrary system sizes if and only if they are related by a matrix product isometry acting on the purification bonds. Beyond the sLPDO setting, we provide a counterexample that suggests an obstruction to a general fundamental theorem for LPDOs with periodic boundary conditions. Finally, we discuss implications for mixed-state symmetry-protected topological phases, including the possibility of nontrivial phases protected only by weak symmetry conditions.
\end{abstract}

\maketitle

\section{Introduction}
Finding and characterizing new quantum phases of matter is a central focus in modern physics. One way to approach this monumental task is to restrict attention to a set of physically relevant states, commonly those arising as ground states of short-range interacting Hamiltonians. Locality imposes strong constraints on the entanglement structure of such states, enabling powerful analytical and numerical techniques known as tensor network methods \cite{TN_review,DMRG_TN}. As a paradigmatic example, matrix product states (MPS) in one-dimensional systems efficiently represent ground states of gapped local Hamiltonians, achieving notable success in the study of pure-state phases \cite{SPT_gapped,SPT_TN}. More recently, advances in experimental quantum simulation and control have extended these questions to mixed states and finite-temperature settings, where a richer understanding of quantum phases is needed. One can approach this question using the mixed-state analog of the MPS, the matrix product density operator (MPDO) \cite{MPDO,MPDO_dynamics}, which provides an efficient description of thermal states of gapped local Hamiltonians \cite{local_gapped_FF,PEPO_Gibbs,MPDO_Gibbs}, as well as boundary states of two-dimensional topologically-ordered systems \cite{PEPS_entanglement_Hamiltonian,PEPS_TO_transfer_matrix,boundary_MPO}.

An important subclass of MPDOs is given by locally purifiable density operators (LPDOs), which naturally arise as states generated by low-depth noisy quantum circuits. The additional structure provides analytical tractability -- for instance, in the study of mixed-state symmetry-protected topological (SPT) phases \cite{LPDO_SPT} -- and numerical advantages, as the dynamics preserve the density matrix structure.

Although LPDOs have proven empirically useful, their mathematical structure is not well-understood. For example, two LPDO representations can be equivalent, in the sense that two different local tensors generate the same global density matrix for arbitrary system sizes. Characterizing the freedom within equivalent tensors is the central problem addressed in this work, with the answer formulated as a fundamental theorem. Such a theorem is significant for two reasons. On the analytical side, it would help to lead to the complete characterization of mixed-state SPT phases, in much the same way that the MPS fundamental theorem \cite{MPS_rep} underpins the full classification of 1D SPT phases~\cite{SPT_gapped,SPT_TN}. On the numerical side, understanding the equivalence of LPDO representations can help identify and eliminate redundant parameters, leading to more efficient algorithms. 

The degrees of freedom associated with equivalent LPDO representations are expected to be substantially richer than those appearing in MPS. In particular, the freedom may reside not only in the virtual bonds, but also in the purification bonds. For example, two LPDO representations related by a matrix product unitary (MPU) acting on the purification bonds necessarily generate the same mixed state. Proving or disproving the converse statement, that this is the entire freedom, is challenging. 

In this paper, we take a first step toward establishing a fundamental theorem for LPDOs. In particular, we focus on a subclass of LPDOs, the sequentially generated LPDOs (sLPDOs). This class is highly expressive and captures many density matrices of physical interest. Crucially, sLPDOs admit an interpretation in terms of successive applications of quantum channels acting on an initial state, which allows us to connect their structure to properties of quantum channels. Within the sLPDO framework, we prove two sufficient conditions on the tensors such that a fundamental theorem holds: that two purification tensors generate the same density matrices for all system sizes if and only if they are related by a matrix product isometry acting on the purification bonds and conjugation on the resulting virtual bonds. We further present a counterexample beyond the sLPDO setting. Specifically, we construct two LPDO representations with periodic boundary conditions that generate the same density matrices, and show that they cannot be connected by any matrix product unitary of bond dimension independent of system size. This counterexample suggests that a fundamental theorem for general LPDOs with periodic boundary conditions can only exist under further assumptions, such as a reduction to canonical form, as in the MPS case. A canonical form in this context would necessarily intertwine the virtual and purification degrees of freedom. We conclude by discussing the implications of the LPDO fundamental theorem for mixed-state SPT phases, showing in particular that a density matrix can exhibit a nontrivial phase even in the presence of only weak symmetry conditions.

We outline the structure of the paper. In section~\ref{sec:prelim}, we introduce the tensor network notation and well-known results about purifications which will be the basis of our study. Section~\ref{sec:Fundamental_theorem} presents the main results -- two generic sufficient conditions for a fundamental theorem connecting two sLPDO tensors generating the same state. In section~\ref{sec:examples}, we show that the sLPDO ansatz captures important classes of mixed states, in particular, all boundaries of $D(G)$ topological order. The rest of the paper deals with tensors beyond the sLPDO ansatz. In section~\ref{sec:counterexample}, we present an example of two tensors generating the same state, whose ancilla spaces are not related by a local MPU: there cannot exist local degrees of freedom relating these two tensors. Lastly, in section~\ref{sec:SPT}, we identify a density operator with a seemingly trivial on-site symmetry, though could still be in a non-trivial phase, going beyond the classifications elucidated in previous literature. We finish with a discussion on obstacles for a full fundamental theorem for LPDOs.

\section{Preliminaries and notation}
\label{sec:prelim}
In this work, we study representations of many-body mixed states and their equivalence, that is, how different representations can generate the same state. For this purpose, we must also consider their purifications, which are pure states on a larger combined system-ancilla Hilbert space such that the mixed state can be obtained by tracing out the ancilla degrees of freedom. We will concern ourselves only with those mixed states whose purifications admit an exact tensor network representation.

Given a density matrix $\rho$ on $\mathbb{C}^d$, a purification is a pure state $\ket{\Psi}\in \mathbb{C}^d\otimes \mathbb{C}^r$ such that $\rho=\tr_{\mathbb{C}^r} \ketbra{\Psi}{\Psi}$. Canonically, $L(\mathbb{C}^r,\mathbb{C}^d)\cong \mathbb{C}^d\otimes (\mathbb{C}^r)^*$, and after choosing a basis, we identify $(\mathbb{C}^r)^*$ with $\mathbb{C}^r$. Thus $\ket{\Psi}$ corresponds to a matrix $\sigma:\mathbb{C}^r\to\mathbb{C}^d$ satisfying $\rho=\sigma\sigma^\dagger$. By abuse of terminology, we call both $\ket{\Psi}$ and $\sigma$ purifications of $\rho$. We say that a purification is minimal if there does not exist another purification with a smaller ancillary dimension, equivalently, that $\sigma$ is full-rank.

We begin by introducing a key result for purifications that will be frequently referred to in this work. Then, we introduce the tensor network formalism along with the different families of tensor networks that will appear later. We end the section by clearly outlining the problem studied in this work.

It is understood throughout that we work in the field of complex numbers and we denote the set of complex $m\times n$ matrices by $\mathcal{M}_{m,n}$ and by $\mathcal{M}_m$ when $m=n$. 

\subsection{Freedom in purifications}
A basic result in quantum information theory says that any two purifications of a given state are related by a partial isometry. Because this plays a central role in the main result, we will restate the fact carefully in full.

\begin{lemma}
\label{lem:purification_partial_isometry}
    Let $A\in\mathcal{M}_{m,p_A}$ and $B\in\mathcal{M}_{m,p_B}$. Then
    \begin{equation}
        AA^\dagger=BB^\dagger
    \end{equation} 
    if and only if there exists a partial isometry $U_0:\mathbb{C}^{p_A}\to\mathbb{C}^{p_B}$ such that
    \begin{equation}
    \label{eq:partial_isometry_properties}
        A=BU_0\,, \quad U_0^\dagger U_0=\mathrm{Proj}_{\mathrm{ker}(A)^\perp}\,, \quad U_0U^\dagger_0=\mathrm{Proj}_{\mathrm{ker}(B)^\perp}\,.
    \end{equation}
    The partial isometry $U_0$ is unique if it exists. Furthermore, $U_0$ ($U^\dagger_0$) is an isometry if $A$ ($B$) has full column rank. 
\end{lemma}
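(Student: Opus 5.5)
The "if" direction is immediate. If $A=BU_0$ with $U_0U_0^\dagger=\mathrm{Proj}_{\ker(B)^\perp}$, then
\[
AA^\dagger=BU_0U_0^\dagger B^\dagger=B\,\mathrm{Proj}_{\ker(B)^\perp}B^\dagger=BB^\dagger,
\]
since $B\,\mathrm{Proj}_{\ker(B)^\perp}=B$.

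For the "only if" direction I will construct $U_0$ explicitly on the row spaces. The key observation is that $AA^\dagger=BB^\dagger$ gives, for every $x\in\mathbb{C}^m$,
\[
\|A^\dagger x\|^2=\langle x,AA^\dagger x\rangle=\langle x,BB^\dagger x\rangle=\|B^\dagger x\|^2 .
\]
Hence the map
\[
V:\mathrm{ran}(A^\dagger)\to\mathrm{ran}(B^\dagger),\qquad A^\dagger x\mapsto B^\dagger x,
\]
is well defined and isometric: if $A^\dagger x=A^\dagger y$, then $\|B^\dagger(x-y)\|=\|A^\dagger(x-y)\|=0$. It is also surjective, and linear by construction.

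Using $\mathrm{ran}(A^\dagger)=\ker(A)^\perp$ and $\mathrm{ran}(B^\dagger)=\ker(B)^\perp$, I define $W:=V$ on $\ker(A)^\perp$ and $W:=0$ on $\ker(A)$, and set $U_0:=W^\dagger:\mathbb{C}^{p_A}\to\mathbb{C}^{p_B}$. Then $U_0$ is a partial isometry with initial space $\ker(A)^\perp$ and final space $\ker(B)^\perp$, which gives the two projector identities in Eq.~\eqref{eq:partial_isometry_properties}.

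It remains to check $A=BU_0$. By construction $WA^\dagger=B^\dagger$ on all of $\mathbb{C}^m$. Taking adjoints gives $AW^\dagger=B$, which is the wrong direction. So I must swap roles: define the isometry from $\mathrm{ran}(B^\dagger)$ to $\mathrm{ran}(A^\dagger)$ by $B^\dagger x\mapsto A^\dagger x$, extend it by zero on $\ker B$ to get $W'$, and set $U_0:=W'^\dagger$. Then $W'B^\dagger=A^\dagger$, and taking adjoints yields $A=BU_0$. The initial space of $U_0$ is the final space of $W'$, namely $\ker(A)^\perp$, and its final space is $\ker(B)^\perp$, as required.

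For uniqueness, suppose $U_0$ and $U_0'$ both satisfy the conditions. From $B(U_0-U_0')=0$, the range of $U_0-U_0'$ lies in $\ker B$. But both ranges lie in $\ker(B)^\perp$, so $U_0=U_0'$.

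For the final claims: if $A$ has full column rank, then $\ker A=0$, so $U_0^\dagger U_0=\id$ and $U_0$ is an isometry. If $B$ has full column rank, then $U_0U_0^\dagger=\id$, so $U_0^\dagger$ is an isometry.

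The main obstacle is only the bookkeeping: getting the direction of the map right and correctly identifying the initial and final spaces. The substantive step is the norm identity $\|A^\dagger x\|=\|B^\dagger x\|$, which is what makes the map well defined.
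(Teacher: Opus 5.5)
Your proof is correct. The ``if'' direction and the uniqueness step are the same as in the paper, but the existence part takes a different route.

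\textbf{How the two constructions differ.} The paper uses polar decompositions $A=\sqrt{AA^\dagger}\,V_A$ and $B=\sqrt{BB^\dagger}\,V_B$, and writes $U_0=V_B^\dagger V_A$ in closed form. It verifies $BU_0=A$ and the projector identities using the facts $V_AV_A^\dagger=\mathrm{Proj}_{\mathrm{col}(A)}$ and $V_A^\dagger V_A=\mathrm{Proj}_{\ker(A)^\perp}$, together with $\mathrm{col}(A)=\mathrm{col}(B)$. You instead build the partial isometry by hand from the single identity $\|A^\dagger x\|=\|B^\dagger x\|$. This is precisely the argument one uses to prove the polar decomposition in the first place. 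Well-definedness, the isometry property, and the initial and final spaces all follow from that identity and $\mathrm{ran}(A^\dagger)=\ker(A)^\perp$, with nothing taken as a black box. The paper's version is shorter and gives an explicit formula; yours is more self-contained. By uniqueness the two operators agree, so your $W'$ equals $V_A^\dagger V_B$.

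\textbf{One cleanup.} Drop the first attempt, where $U_0:=W^\dagger$ is labelled as a map $\mathbb{C}^{p_A}\to\mathbb{C}^{p_B}$ although it actually goes $\mathbb{C}^{p_B}\to\mathbb{C}^{p_A}$. In fact no role swap was needed. Since $W^\dagger W=\mathrm{Proj}_{\ker(A)^\perp}$ and $\mathrm{ran}(A^\dagger)=\ker(A)^\perp$, you get $W^\dagger B^\dagger=W^\dagger WA^\dagger=A^\dagger$, i.e.\ $A=BW$. So $U_0=W$ directly, and your corrected $W'$ is just $W^\dagger$.
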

A proof is provided in appendix~\ref{app:proof}. 

We emphasize that this lemma is an existence statement, which picks a canonical unique solution to $A=BU$. The full freedom is given by $U=U_0+K$, where $K$ is an arbitrary map satisfying $\mathrm{col}(K)\subseteq\mathrm{ker}(B)$. 

\begin{corollary}
\label{cor:purification_freedom}
    Let $\rho$ be a density matrix on $\mathbb{C}^d$ of rank $r$. Let $\sigma:\mathbb{C}^r\to \mathbb{C}^d$ be a minimal purification of $\rho$. For any other purification $\eta:\mathbb{C}^s\to \mathbb{C}^d$ of $\rho$ with $s\geq r$, there exists a unique isometry $V:\mathbb{C}^s\to\mathbb{C}^r$ with $V V^\dagger=\id_{\mathbb{C}^{r}}$ and $V^\dagger V=\mathrm{Proj}_{\mathrm{ker}(\eta)^\perp}$ such that
    \begin{equation}
        \eta=\sigma V\,.
    \end{equation}
\end{corollary}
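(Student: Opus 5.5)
The plan is to apply Lemma~\ref{lem:purification_partial_isometry} with $A=\eta\in\mathcal{M}_{d,s}$ and $B=\sigma\in\mathcal{M}_{d,r}$. Both are purifications of $\rho$, so $\eta\eta^\dagger=\rho=\sigma\sigma^\dagger$. The lemma then gives a unique partial isometry $U_0:\mathbb{C}^s\to\mathbb{C}^r$ with
\begin{equation}
\eta=\sigma U_0\,,\quad U_0^\dagger U_0=\mathrm{Proj}_{\mathrm{ker}(\eta)^\perp}\,,\quad U_0U_0^\dagger=\mathrm{Proj}_{\mathrm{ker}(\sigma)^\perp}\,.
\end{equation}
We will set $V:=U_0$. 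The first two properties in the statement then hold immediately, and it remains to identify $U_0U_0^\dagger$ with $\id_{\mathbb{C}^r}$.

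For this we use minimality. Since $\sigma:\mathbb{C}^r\to\mathbb{C}^d$ is a minimal purification, it has full rank, and $\mathrm{rank}(\sigma)=\mathrm{rank}(\sigma\sigma^\dagger)=\mathrm{rank}(\rho)=r$. Hence $\sigma$ has full column rank and $\mathrm{ker}(\sigma)=\{0\}$. It follows that $\mathrm{Proj}_{\mathrm{ker}(\sigma)^\perp}=\id_{\mathbb{C}^r}$, so $VV^\dagger=\id_{\mathbb{C}^r}$. Equivalently, this is the final clause of the lemma: because $B=\sigma$ has full column rank, $U_0^\dagger$ is an isometry. The hypothesis $s\geq r$ is consistent with this, since $VV^\dagger=\id_{\mathbb{C}^r}$ forces $\mathrm{rank}(V)=r\le s$. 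In fact it is automatic, because $\mathrm{rank}(\eta)=r$.

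Uniqueness is the step that needs care, since the lemma asserts uniqueness only for the partial isometry satisfying all three conditions. Suppose $V'$ satisfies $\eta=\sigma V'$, $V'V'^\dagger=\id$ and $V'^\dagger V'=\mathrm{Proj}_{\mathrm{ker}(\eta)^\perp}$. Then $V'$ satisfies exactly the three conditions of the lemma, using $\mathrm{Proj}_{\mathrm{ker}(\sigma)^\perp}=\id$, so $V'=U_0$. Alternatively, uniqueness follows directly from injectivity of $\sigma$: $\sigma V=\sigma V'$ implies $V=V'$. This also agrees with the remark after the lemma that the full freedom is $U_0+K$ with $\mathrm{col}(K)\subseteq\mathrm{ker}(\sigma)=\{0\}$. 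No step is a real obstacle; the only point needing attention is that minimality gives $\mathrm{ker}(\sigma)=0$.
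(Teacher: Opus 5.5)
Your proof is correct and follows the paper's route: apply Lemma~\ref{lem:purification_partial_isometry} with $A=\eta$ and $B=\sigma$, and use minimality of $\sigma$ (full column rank, so $\ker\sigma=\{0\}$) to turn $U_0U_0^\dagger=\mathrm{Proj}_{\ker(\sigma)^\perp}$ into $\id_{\mathbb{C}^r}$, which is exactly the lemma's final clause. Your direct uniqueness argument via injectivity of $\sigma$ is a clean complement to simply invoking the lemma's uniqueness statement.
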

To be precise, $V$ is known as a coisometry (that $V^\dagger$ is an isometry), which is a reflection of the notation convention. Rather, it should be just interpreted as an isometric embedding of the minimal ancilla space of $\sigma$ into that of $\rho$.

\subsection{Tensor networks}
A tensor network is a quantum state or operator built out of local rank-$k$ tensors $A_{[n]}$, which in this context is just a multilinear map. They admit a graphical calculus which will be used throughout. As an example, we will repeatedly encounter the rank-4 tensor $A_{[n]}$ 
\begin{equation}
    (A_{[n]})^{ia}_{\alpha\beta}=\begin{tikzpicture}[baseline={([yshift=-0.5ex]current bounding box.center)},x=\cell cm,y=\cell cm]
            \draw (-1,0) node [left] {\scriptsize $\alpha$} -- (1,0) node [right] {\scriptsize $\beta$};
            \draw[dashed] (0,-1) node [below] {\scriptsize $a$} -- (0,0);
            \draw (0,0) -- (0,1) node [above] {\scriptsize $i$};
            \MPS{0,0}{1}{blue!20}{\scriptsize $A_{[n]}$} 
        \end{tikzpicture}\in\mathbb{C}\,,
\end{equation}
with indices $i=1,\dots,d$ corresponding to the physical dimension, $a=1,\dots,p$ to an ancillary dimension and $\alpha,\beta=1,\dots,D_{n-1/n}$ corresponding to the left/right virtual (bond) dimension. We often omit the virtual indices and understand $A^{ia}_{[n]}\in\mathcal{M}_{D_{n+1},D_n}$ as a $D_{n+1}\times D_n$-dimensional matrix.

We will encounter various variations of tensor networks, so we present them under a generalized framework, as matrix product vectors (MPVs). Let $\{A_{[n]}\}_{n=1}^N$ be a set of rank-3 tensors.
\begin{definition}
    An MPV is a vector of the form
    \begin{align}
        \ket{\Psi^{(N)}(\{A_{[n]}\})}&=\sum_{i_N,\dots,i_N} \tr(A^{i_1}_{[N]}\cdots A^{i_2}_{[2]} A^{i_1}_{[1]})\ket{i_N,\dots,i_1} \nonumber \\
        &=\begin{tikzpicture}[baseline={([yshift=-0.5ex]current bounding box.center)},x=\cell cm,y=\cell cm]
        \draw (-1,0) -- (1,0);
        \node at (1.5,0) {$\cdots$};
        \draw (2,0) -- (5.5,0);
        \draw (0,0) -- (0,1);
        \draw (3,0) -- (3,1);
        \draw (4.5,0) -- (4.5,1);
        \MPS{0,0}{1}{blue!20}{\scriptsize $A_{[N]}$}
        \MPS{3,0}{1}{blue!20}{\scriptsize $A_{[2]}$}
        \MPS{4.5,0}{1}{blue!20}{\scriptsize $A_{[1]}$}
        \draw (-1,0) to [out=180,in=180] (-1,-0.5);
        \draw (5.5,-0.5) to [out=0,in=0] (5.5,0);
    \end{tikzpicture}\,.
    \end{align}
\end{definition}
The state has periodic boundary conditions (PBC) and if $A^i_{[n]}=A^i$ for all $1\leq n\leq N$, we say that the MPV is homogeneous and write $\ket{\Psi^{(N)}(A)}$.

The definition of MPVs includes the usual MPSs and also matrix product operators (MPOs) by treating the physical dimension as an index on a vectorized Hilbert space. A subclass that we will encounter throughout is the class of tensor network unitaries \cite{MPU,MPU_nontI}.
\begin{definition}
    A matrix product unitary (MPU) is an MPO that generates a unitary operator for all system sizes $N\geq 1$.
\end{definition}
\noindent More generally, a matrix product isometry (MPI) is an MPO that generates an isometry for all system sizes $N\geq 1$.

Finally, we define the object of focus in the paper, the MPDO.
\begin{definition}
    A PBC MPDO is a positive-semidefinite MPO
    \begin{align}
    \label{eq:MPDO}
        \rho&^{(N)}(\{M_{[n]}\})\nonumber
        \\&=\hspace{-2mm}\sum_{\substack{i_1,\dots,i_N \\ j_1,\dots,j_N}}\hspace{-2mm}\tr(M^{i_Nj_N}_{[N]}\cdots M^{i_1j_1}_{[1]})\ketbra{i_N,\dots,i_1}{j_N,\dots,j_1} \nonumber \\
        &=\begin{tikzpicture}[baseline={([yshift=-0.5ex]current bounding box.center)},x=\cell cm,y=\cell cm]
        \draw (-1,0) -- (1,0);
        \node at (1.5,0) {$\cdots$};
        \draw (2,0) -- (5.5,0);
        \draw (0,-1) -- (0,1);
        \draw (3,-1) -- (3,1);
        \draw (4.5,-1) -- (4.5,1);
        \MPS{0,0}{1}{red!20}{\scriptsize $M_{[N]}$}
        \MPS{3,0}{1}{red!20}{\scriptsize $M_{[2]}$}
        \MPS{4.5,0}{1}{red!20}{\scriptsize $M_{[1]}$}
        \draw (-1,0) to [out=180,in=180] (-1,-0.5);
        \draw (5.5,-0.5) to [out=0,in=0] (5.5,0);
    \end{tikzpicture}\,.
\end{align}
\end{definition}

The problem of determining whether a given MPO is an MPDO, that is, that the operator is positive and has unit trace, has been shown to be undecidable in the thermodynamic limit and NP-hard in finite systems \cite{MPDO_NPhard}. To make progress, it is common to consider subclasses of MPDOs called locally purifiable density operators (LPDOs), which are those density operators for which there exists an MPS purification and are positive semi-definite by construction. However, it is also known that LPDOs cannot represent the whole class, because not all MPDOs can be written as LPDOs \cite{MPDO_LPDO}. Nevertheless, we will work within this ansatz since it includes many mixed states of interest, for example, boundaries of two-dimensional topologically-ordered states~\cite{PEPS_entanglement_Hamiltonian,PEPS_TO_transfer_matrix,MPDO_RFP,boundary_MPO} and a class of thermal states of local Hamiltonians~\cite{MPDO_Gibbs}. The precise definition of LPDO is as follows~\cite{MPDO}.

\begin{definition}
    An LPDO is an MPDO with the further assumption that each local tensor $M_{[n]}$ can be locally purified, \ie there exists a rank-4 tensor $A_{[n]}$ such that
    \begin{equation}
    \label{eq:LPDO}
        M^{ij}_{[n]}=\sum_{a=1}^p A^{ia}_{[n]}\otimes \overline{A}{}^{ja}_{[n]}\,.
    \end{equation}
    \end{definition}
\noindent Graphically,
    \begin{equation}
        \begin{tikzpicture}[baseline={([yshift=-0.5ex]current bounding box.center)},x=\cell cm,y=\cell cm]
            \draw[line width=1.2pt] (-1,0) -- (1,0);
            \draw (0,-1) -- (0,1);
            \MPS{0,0}{1}{red!20}{\scriptsize $M_{[n]}$}
        \end{tikzpicture}=\begin{tikzpicture}[baseline={([yshift=-0.5ex]current bounding box.center)},x=\cell cm,y=\cell cm]
            \draw (-1,0) -- (1,0);
            \draw (0,-1) -- (0,0);
            \draw[dashed] (0,0) -- (0,1.5);
            \draw (0,1.5) -- (0,2.5);
            \draw (-1,1.5) -- (1,1.5);
            \MPS{0,0}{1}{blue!20}{\scriptsize $\overline{A}_{[n]}$}
            \MPS{0,1.5}{1}{blue!20}{\scriptsize $A_{[n]}$}
        \end{tikzpicture}\,.
    \end{equation}
We refer to the rank-4 tensor $A_{[n]}$ as the purification tensor of the LPDO. An equivalent way to see the state is that $\{A_{[n]}\}$ generates an MPO
\begin{equation}
\begin{aligned}
    \sigma&^{(N)}(\{A_{[n]}\})\\
    &=\hspace{-2mm}\sum_{\substack{i_1,\dots,i_N \\ a_1,\dots,a_N}}\hspace{-2mm}\tr(A^{i_Na_N}_{[N]}\cdots A^{i_1a_1}_{[1]})\ketbra{i_N,\dots,i_1}{a_N,\dots,a_1}\,,
\end{aligned}
\end{equation}
which, when vectorized, corresponds to a purification of the LPDO $\rho^{(N)}(\{M_{[n]}\})$. Thus $\rho^{(N)}(\{M_{[n]}\})=\sigma^{(N)}(\{A_{[n]}\})\sigma^{(N)}(\{A_{[n]}\})^\dagger$.

\subsection{Sequentially generated LPDOs}
\label{sec:def-sLPDO}
We introduce a class of LPDOs which we call sequentially generated LPDOs (sLPDOs). These coincide with finite-system, single-boundary finitely-correlated states constructed in \cite{FNW}. Given any purification tensor $A^{ia}_{\alpha\beta}$, we can associate to it a 1-input/2-output completely positive map $\mathcal{E}_A(\rho)=\sum_{i,j}\left(\sum_{a}A^{ia}\rho(A^{ja})^\dagger\right)\otimes \ketbra{i}{j}$ by viewing $A^a:\mathbb{C}^D\to\mathbb{C}^D\otimes\mathbb{C}^d$ as its Kraus operators. We define a state that is generated sequentially by applying $\mathcal{E}_A$ on the virtual dimension at each step, starting with an initial mixed state $\omega$ on $\mathbb{C}^D$. We interpret the final virtual leg as a physical index, giving a state on $N+1$ sites.

\begin{definition}
    An sLPDO is a state of the form
    \begin{align}
    \label{eq:sequential_LPDO}
    \rho^{(N)}(\{A_{[n]}\};\omega)&=(\mathcal{E}_{A_{[N]}}\otimes \mathrm{id}_{(\mathbb{C}^d)^{\otimes N-1}})\circ\cdots \nonumber \\
    &\phantom{={}}\circ(\mathcal{E}_{A_{[2]}}\otimes \mathrm{id}_{\mathbb{C}^d})\circ\mathcal{E}_{A_{[1]}}(\omega) \nonumber \\
        &= \begin{tikzpicture}[baseline={([yshift=-0.5ex]current bounding box.center)},x=\cell cm,y=\cell cm]
            \draw (1,0) -- (-1,0) -- (-1,1);
            \node at (1.5,0) {$\cdots$};
            \draw (2,0) -- (5.5,0) -- (5.5,-1);
            \draw (0,0) -- (0,1);
            \draw[dashed] (0,-1) -- (0,0);
            \draw (4.5,0) -- (4.5,1);
            \draw[dashed] (4.5,-1) -- (4.5,0);
            \draw (3,0) -- (3,1);
            \draw[dashed] (3,-1) -- (3,0);
            \MPS{0,0}{1}{blue!20}{\scriptsize $A_{[N]}$}
            \MPS{3,0}{1}{blue!20}{\scriptsize $A_{[2]}$}
            \MPS{4.5,0}{1}{blue!20}{\scriptsize $A_{[1]}$}
            \begin{scope}[shift={(0,-0.75)}, yscale=-1, shift={(0,0.75)}]
            \draw (1,0) -- (-1,0) -- (-1,1);
            \node at (1.5,0) {$\cdots$};
            \draw (2,0) -- (5.5,0) -- (5.5,-1);
            \draw (0,0) -- (0,1);
            \draw (4.5,0) -- (4.5,1);
            \draw (3,0) -- (3,1);
            \MPS{0,0}{1}{blue!20}{\scriptsize $A_{[N]}$}
            \MPS{3,0}{1}{blue!20}{\scriptsize $A_{[2]}$}
            \MPS{4.5,0}{1}{blue!20}{\scriptsize $A_{[1]}$}
            \end{scope}
            \MPU{5.5,-0.75}{0.75}{green!20}{\scriptsize $\omega$}
        \end{tikzpicture}\,.
    \end{align}
    We say that the sLPDO is generated by the pair $(\{A_{[n]}\},\omega)$. If all tensors $A_{[n]}=A$ are site independent, it is a homogeneous sLPDO and we write $\rho^{(N)}(A;\omega)$.
\end{definition}
We define $\sigma^{(N)}(\{A_{[n]}\};\sqrt{\omega})$ as the purification of the sLPDO generated by $(\{A_{[n]}\},\omega)$, that is,
\begin{equation}
    \sigma^{(N)}(\{A_{[n]}\};\sqrt{\omega})=\begin{tikzpicture}[baseline={([yshift=-0.5ex]current bounding box.center)},x=\cell cm,y=\cell cm]
            \draw (1,0) -- (-1,0) -- (-1,1);
            \node at (1.5,0) {$\cdots$};
            \draw (2,0) -- (5.75,0) -- (5.75,-1);
            \draw (0,0) -- (0,1);
            \draw[dashed] (0,-1) -- (0,0);
            \draw (4.5,0) -- (4.5,1);
            \draw[dashed] (4.5,-1) -- (4.5,0);
            \draw (3,0) -- (3,1);
            \draw[dashed] (3,-1) -- (3,0);
            \MPS{0,0}{1}{blue!20}{\scriptsize $A_{[N]}$}
            \MPS{3,0}{1}{blue!20}{\scriptsize $A_{[2]}$}
            \MPS{4.5,0}{1}{blue!20}{\scriptsize $A_{[1]}$}
            \MPU{5.75,0}{0.75}{green!20}{\scriptsize $\sqrt{\omega}$}           
        \end{tikzpicture}\,,
\end{equation}
where $\sqrt{\omega}\sqrt{\omega}{}^\dagger=\omega$. Here, we abuse the notation $\sqrt{\omega}$, because we could generally allow any non-square full-rank matrix satisfying $\sqrt{\omega}\sqrt{\omega}{}^\dagger=\omega$. If $\omega$ has rank $r$, then $\sigma^{(N)}(\{A_{[n]}\};\sqrt{\omega})$ can always be represented as an operator $\sigma^{(N)}(\{A_{[n]}\};\sqrt{\omega}):(\mathbb{C}^p)^{\otimes N}\otimes \mathbb{C}^r\to\mathbb{C}^D\otimes (\mathbb{C}^d)^{\otimes N}$.

We note that for a general tensor $A$, the associated completely-positive map $\mathcal{E}_A$ need not be trace-preserving. There exists a gauge transformation $XA^{ia}X^{-1}$ to locally make it so -- however while on internal bonds the gauges cancel, the final dangling memory leg retains a gauge matrix $X$, so is generally not the same sLPDO.

We recall the motivation for studying sequentially generated states in the case of MPS. Protocols to prepare such multiqudit states involve a Markovian open quantum system modeled by an emitter Hilbert space $\mathcal{H}_{\text{em}}\cong \mathbb{C}^D$, sequentially leaking information in the form of qudits into the environment $\bigotimes_{n=1}^N \mathcal{H}_n\cong (\mathbb{C}^d)^{\otimes N}$ \cite{MPS_sequential}. This successive leaking is modeled by an isometric interaction $V_{[n]}:\mathcal{H}_{\text{em}}\to\mathcal{H}_{\text{em}}\otimes\mathcal{H}_n$ acting on an initial emitter state $\ket{R}\in \mathcal{H}_{\text{em}}$. Conditioned on a final emitter state $\ket{L}\in\mathcal{H}_{\text{em}}$, the multiqudit state is an MPS generated by tensors $\{V_{[n]}\}_{n=1}^N$ and boundary vectors $\ket{L}$ and $\ket{R}$.

sLPDOs are mixed state generalizations of sequentially generated states, and can be seen in two equivalent ways by Stinespring dilation. Either, that the leaking interaction is replaced by a quantum operation $\mathcal{E}_A$, or that parts of the final pure multiqudit state are traced out. The only other difference is that we retain the emitter system such that the final mixed state lives on $\mathcal{H}_{\text{em}}\otimes\bigotimes_{n=1}^N \mathcal{H}_n$.

Another motivation for considering sLPDOs stems from their physical relevance with Gibbs states. In particular, when $\mathcal{E}_A$ is a unital quantum channel, the resulting $\rho^{(N)}(A;\omega)$ exhibits decaying conditional mutual information and hence can be represented as a Gibbs state of a local Hamiltonian~\cite{MPDO_Gibbs}.

As it turns out, any PBC LPDO admits an sLPDO representation at the cost of squaring the bond dimension. Starting from a state in the uniform LPDO representation generated by purification tensors $\{\tilde{A}_{[n]}\}$, we will find its sLPDO representation $\{A_{[n]}\}$.

Consider the PBC LPDO purification $\sigma^{(N)}(\{\tilde{A}_{[n]}\})$. Viewing $\tilde{A}_{[N]}$ as a map $\tilde{A}_{[N]}:\mathbb{C}^{d}\otimes \mathbb{C}^{D_N}\to \mathbb{C}^p\otimes \mathbb{C}^{D_{N+1}}$, perform any matrix factorization, such as a singular value decomposition or a QR decomposition:
\begin{equation}
    \begin{tikzpicture}[baseline={([yshift=-0.5ex]current bounding box.center)},x=\cell cm,y=\cell cm]
            \draw[dashed] (0,0.5) -- (0,1.5);
            \draw (0,1.5) -- (0,2.5);
            \draw (-1,1.5) -- (1,1.5);
            \MPS{0,1.5}{1}{blue!20}{\scriptsize $\tilde{A}_{[N]}$}
        \end{tikzpicture}=\begin{tikzpicture}[baseline={([yshift=-0.5ex]current bounding box.center)},x=\cell cm,y=\cell cm]
            \draw[dashed] (0,-1.0) -- (0,0.);
            \draw (-1,0.) -- (0,0);
            \draw[thick] (0,0) -- (1,1);
            \draw (1,1) -- (2,1);
            \draw (1,1) -- (1,2);
            \MPU{0,0}{0.75}{JungleGreen!40}{}
            \MPU{1,1.}{0.75}{Plum!40}{}
        \end{tikzpicture}.
\end{equation}
Next, we substitute this into the PBC LPDO ansatz and perform a shift on the purification indices, to obtain
\begin{equation}
    \begin{tikzpicture}[baseline={([yshift=-0.5ex]current bounding box.center)},x=\cell cm,y=\cell cm]
        \draw (-1,0) -- (2.5,0);
        \node at (3,0) {$\cdots$};
        \draw (3.5,0) -- (5.5,0);
        \draw (0,0) -- (0,1);
        \draw (1.5,0) -- (1.5,1);
        \draw (4.5,0) -- (4.5,1);
        \draw[dashed] (0,-0.5) .. controls +(0,-1) and +(0,1) .. (4.5,-2);
        \draw[dashed] (1.5,-0.5) .. controls +(0,-1) and +(0,1) .. (0,-2);
        \draw[dashed] (4.5,-0.5) .. controls +(0,-1) and +(0,1) .. (3,-2);
        \MPS{0,0}{1}{blue!20}{\scriptsize $\tilde{A}_{[N]}$}
        \MPS{1.5,0}{1}{blue!20}{\scriptsize $\tilde{A}_{[-]}$}
        \MPS{4.5,0}{1}{blue!20}{\scriptsize $\tilde{A}_{[1]}$}
        \draw (-1,0) to [out=180,in=180] (-1,-0.5);
        \draw (5.5,-0.5) to [out=0,in=0] (5.5,0);
    \end{tikzpicture}=\begin{tikzpicture}[baseline={([yshift=-0.5ex]current bounding box.center)},x=\cell cm,y=\cell cm]
        \draw (-1,0) -- (1.0,0);
        \node at (1.5,0) {$\cdots$};
        \draw (2,0) -- (4.5,0);
        \draw[dashed] (0,-1) -- (0,0);   \draw (0,0) -- (0,1);
        \draw[dashed] (3.0,-1) -- (3.0,0); \draw (3.0,0) -- (3.0,1);
        \draw (-1.2,0) -- (-1.2,1);
        \draw[dashed] (4.2,-1) -- (4.2,0);
        \draw[thick] (4.2,0) to [out=135,in=0] (3.5,-0.75);
        \draw[thick] (3.5,-0.75) -- (-0.5, -0.75);
        \draw[thick] (-1.2,0) to [out=45,in=180] (-0.5, -0.75);
        \MPS{0,0}{1}{blue!20}{\scriptsize $\tilde{A}_{[-]}$}
        \MPS{3.0,0}{1}{blue!20}{\scriptsize $\tilde{A}_{[1]}$}
        \MPU{4.2,0}{0.75}{JungleGreen!40}{}
        \MPU{-1.2,0}{0.75}{Plum!40}{}
    \end{tikzpicture}\,,
\end{equation}
where we abbreviated $A_{[-]}=A_{[N-1]}$ to make it fit. Indeed, this now has the form of an sLPDO, where we identify an sLPDO $\sigma^{(N-1)}(\{A_{[n]}\};\sqrt{\omega})$ with bulk tensors
\begin{equation}
    \begin{tikzpicture}[baseline={([yshift=-0.5ex]current bounding box.center)},x=\cell cm,y=\cell cm]
        \draw[dashed] (0,-1) -- (0,0);
        \draw (0,0) -- (0,1);
        \draw (-1,0) -- (1,0);
        \MPS{0,0}{1}{blue!20}{\scriptsize $A_{[n]}$}
    \end{tikzpicture}=\begin{tikzpicture}[baseline={([yshift=-0.5ex]current bounding box.center)},x=\cell cm,y=\cell cm]
        \draw[dashed] (0,-1) -- (0,0);
        \draw (0,0) -- (0,1);
        \draw (-1,0) -- (1,0);
        \draw[thick] (-1,-0.75)-- (1,-0.75);
        \MPS{0,0}{1}{blue!20}{\scriptsize $\tilde{A}_{[n]}$}
    \end{tikzpicture}\quad \mathrm{for} \ n=1,\dots, N-2\,,
\end{equation}
and boundaries 
\begin{equation}
        \begin{tikzpicture}[baseline={([yshift=-0.5ex]current bounding box.center)},x=\cell cm,y=\cell cm]
            \draw[dashed] (0,0.5) -- (0,1.5);
            \draw (0,1.5) -- (0,2.5);
            \draw (-1,1.5) -- (1,1.5);
            \draw (-1,1.5) -- (-1,2.5);
            \MPS{0,1.5}{1}{blue!20}{\scriptsize $A_{[-]}$}
        \end{tikzpicture}=\begin{tikzpicture}[baseline={([yshift=-0.5ex]current bounding box.center)},x=\cell cm,y=\cell cm]
            \draw (-1,0) -- (1.0,0);
            \draw[thick] (-1.2,0) to [out=45,in=180] (-0.5, -0.75);
            \draw[thick] (1.0,-0.75) -- (-0.5, -0.75);
            \draw (-1.2,0) -- (-1.2,1);
            \draw (0,0) -- (0,1);
            \draw[dashed] (0,-1) -- (0,0);
            \MPS{0,0}{1}{blue!20}{\scriptsize $\tilde{A}_{[-]}$}
            \MPU{-1.2,0}{0.75}{Plum!40}{}
        \end{tikzpicture}\,,\qquad  \begin{tikzpicture}[baseline={([yshift=-0.5ex]current bounding box.center)},x=\cell cm,y=\cell cm]
            \draw (-0.75,-1.25) -- (0,-1.25) -- (0,-2);
            \MPU{0,-1.25}{0.75}{green!20}{\scriptsize $\sqrt{\omega}$}
        \end{tikzpicture}=\begin{tikzpicture}[baseline={([yshift=-0.5ex]current bounding box.center)},x=\cell cm,y=\cell cm]
            \draw (-0.75,-1.25) -- (0,-1.25);
            \draw[dashed] (0,-2) -- (0,-1.25);
            \draw[thick] (0,-1.25) to [out=135,in=0] (-0.75,-2);
            \MPU{0,-1.25}{0.75}{JungleGreen!40}{}
        \end{tikzpicture}\,.
\end{equation}

\subsection{Equivalent tensors generating the same sLPDO}
As discussed in the introduction, the tensor network representation of a state is not unique. In the setting of pure states and MPVs, we define an equivalence relation 
\begin{equation}
    A\sim_{\ket{\psi}} B
\end{equation}
between two tensors generating MPVs if the MPVs satisfy $\ket{\Psi^{(N)}(A)}=\ket{\Psi^{(N)}(B)}$ for all system sizes $N$. There is an obvious generalization to inhomogeneous tensors. An important mathematical problem is to identify the freedom among equivalent tensors. Characterizing this gauge freedom results in the fundamental theorem, which has been completed for MPSs \cite{MPS_rep,MPS_irreducible_form,MPDO_RFP,MPS_boundary}, MPUs \cite{MPU}, and for important PEPS classes \cite{PEPS_semi_injective_fundamental_theorem,PEPS_normal_fundamental_theorem}. 

In this work, we study the fundamental theorem for LPDOs. Here the relevant equivalence is a little more subtle: not equality as MPVs, but equality after tracing out the ancilla, so we write 
\begin{equation}
    A\sim_{\rho}B
\end{equation}
if purification tensors $A$ and $B$ generate the same density operator $\rho^{(N)}(A)=\rho^{(N)}(B)$ for all system sizes $N$. A slight and obvious modification is required for homogeneous sLPDOs, as the state is generated by a pair $(A,\omega)$. Requiring only $(A,\sqrt{\omega})\sim_{\ket{\psi}}(B,\sqrt{\omega'})$ identifies tensors generating the same purification, missing the additional isometric freedom in the purification degrees of freedom. From Lemma~\ref{lem:purification_partial_isometry}, given two purification tensors $A$ and $B$ and respective initial mixed states $\omega$ and $\omega'$ that generate the same sLPDO, the purifications $\sigma^{(N)}(A;\sqrt{\omega})$ and $\sigma^{(N)}(B;\sqrt{\omega'})$ must differ by a global partial isometry. The central question is: under what conditions does this global partial isometry guarantee an efficient tensor network decomposition? Such a tensor network decomposition is important since it allows tensors $A$ and $B$ to be connected locally, which is not expected to be the case in general.

Still, a positive answer to the above question under sufficient conditions constitutes only a partial answer to the full fundamental theorem for general sLPDO or LPDO purification tensors. By analogy with the fundamental theorem for MPS, one might expect that tensors failing to satisfy these sufficient conditions could first be transformed, within their equivalence class, into a suitable canonical form for which such a theorem can be established. 
Whether a canonical form for LPDO tensors even exists, is however elusive. 
Given the non-one-dimensional contraction structure of LPDOs, \ie contraction along both horizontal and vertical directions, we anticipate obstructions to a fundamental theorem analogous to those encountered for PEPS.

\section{Fundamental theorem for sequentially generated LPDOs}
\label{sec:Fundamental_theorem}
In this section, we present two independent sets of sufficient conditions for the existence of a fundamental theorem that relates tensors generating the same state. We formulate this in the class of homogeneous sLPDOs. Interestingly, even though both conditions are generic, one implies a much stronger fundamental theorem than the other.

We first introduce the canonical form (CF) for MPVs \cite{MPS_rep} under the setting of sLPDO, which will be useful for the statement of the fundamental theorem. Let $A^{ia}_{[n]}\in \mathcal{M}_{D_{n+1},D_{n}}$ be a purification tensor. A purification of an sLPDO generated by $(\{A_{[n]}\},\sqrt{\omega})$ is in CF if the following conditions are satisfied:
\begin{enumerate}
    \item $\sum_{i,a} (A^{ia}_{[n]})^{\dagger}A^{ia}_{[n]}=\id_{D_{[n]}}$ for $1\leq n\leq N$.
    \item $\sum_{i,a}A^{ia}_{[n]}\Lambda_{[n]}(A^{ia}_{[n]})^{\dagger}=\Lambda_{[n+1]}$ for $1\leq n\leq N-1$. The condition on the $N^{\mathrm{th}}$ tensor $A_{[N]}$ is that the overall state is normalized, $\tr\sum_{i,a}A^{ia}_{[N]}\Lambda_{[N]}(A^{ia}_{[N]})^\dagger=1$.
    \item $\Lambda_{[n]}>0$ is diagonal and $\tr\Lambda_{[n]}=1$ for $1\leq n\leq N$. In particular, $\Lambda_{[1]}=\omega$ is diagonal.
\end{enumerate}
A limitation of the homogeneous sLPDO ansatz is that it is not closed under a CF transformation -- that is, the CF of $\sigma^{(N)}(A;\sqrt{\omega})$ is generally a non-homogeneous $\sigma^{(N)}(\{\tilde{A}_{[n]}\};\sqrt{\tilde\omega})$. Because the final memory leg is retained as a physical degree of freedom, putting an sLPDO into CF generally requires site-dependent gauge transformations.

\subsection{Step-injective tensors}
The first condition is a local property of the purification tensor called step-injectivity. In the following, we formulate definitions and theorems for homogeneous sLPDOs, though it can be easily generalized to inhomogeneous sLPDOs by requiring that each tensor $A_{[n]}$ is step-injective.

\begin{definition}
\label{def:injectivity_sLPDO}
    A purification tensor $A$ is said to be step-injective if there exists another rank-4 tensor $A^{-1}$ such that $\sum_{i,\alpha}(A^{-1})^{ia}_{\alpha\beta}A^{ib}_{\alpha\gamma}=\delta_{ab}\delta_{\beta\gamma}$. Equivalently, when viewed as a map $A:\mathbb{C}^p\otimes \mathbb{C}^D\to\mathbb{C}^d\otimes \mathbb{C}^D$, $A$ is injective.
\end{definition}
\noindent Graphically, the step-injective condition says that
\begin{equation}
\label{eq:injectivity_sLPDO}
    \begin{tikzpicture}[baseline={([yshift=-0.5ex]current bounding box.center)},x=\cell cm,y=\cell cm]
        \draw (-1,1.5) -- (-1,0) -- (1,0);
        \draw[dashed] (0,-1) -- (0,0);
        \draw (0,0) -- (0,1.5);
        \draw[dashed] (0,2.5) -- (0,1.5);
        \MPS{0,0}{1}{blue!20}{\scriptsize $A$}
        \draw (-1,1.5) -- (1,1.5);
        \MPS{0,1.5}{1}{blue!20}{\scriptsize $A^{-1}$}
    \end{tikzpicture}=
    \begin{tikzpicture}[scale=0.6, baseline={([yshift=-0.5ex]current bounding box.center)}]
        \draw[dashed] (0,2.5) -- (0,-1);
        \draw (1,1.5) -- (0.5,1.5) -- (0.5,0) -- (1,0);
    \end{tikzpicture}\,.
\end{equation}

Injectivity is a common assumption in both the MPS \cite{MPS_rep,MPDO_RFP} and PEPS \cite{PEPS_injective} literature, because it is a useful property that establishes a sort of minimality between the virtual and physical space; it grants access to possibly contracted degrees of freedom by `opening up bonds' with the inverse tensor. In fact, a generic purification tensor $A$ is step-injective because a random matrix is expected to be full-rank; though step-injectivity bounds the Kraus rank of the quantum channels by the physical dimension $d\geq p$. Non-injective tensors are special cases and may signal the presence of underlying symmetries. In the case of sLPDOs, a step-injective tensor can be interpreted as an information-preserving interaction at each step of the sequential generation. We note that when the ancilla dimension is trivial, the definition of step-injectivity here \textit{does not} reduce to the usual injectivity of the MPS tensor -- in fact, Eq.~\eqref{eq:injectivity_sLPDO} reduces to a canonical form condition, which can always be achieved.

An immediate implication of step-injectivity of a tensor $A$ is that the purification $\sigma^{(N)}(A;\sqrt{\omega})$ is full-rank, i.e.\@ minimal, in the sense that the ancillary dimension cannot be reduced. Furthermore, the left inverse, which we shall denote by $\tau^{(N)}(A; \sqrt{\omega})$, is itself a tensor network. This is a non-trivial statement, because for generic MPOs, the inverse operator is not an MPO for any system size.

\begin{proposition}
\label{prop:inverse_MPO}
    The left inverse $\tau^{(N)}(A;\sqrt{\omega})$ of a purification $\sigma^{(N)}(A;\sqrt{\omega})$ generated by a pair $(A,\sqrt{\omega})$, with $A$ step-injective, allows an MPO representation with the same bond dimension as $A$.
    \begin{proof}
        We prove this by constructing the MPO representation of the left inverse $\tau^{(N)}(A;\sqrt{\omega})$:
        \begin{equation}
            \tau^{(N)}(A;\sqrt{\omega})=\begin{tikzpicture}[baseline={([yshift=-0.5ex]current bounding box.center)},x=\cell cm,y=\cell cm]
            \draw (-1,0.5) -- (-1,1.5) -- (1,1.5);
            \node at (1.5,1.5) {$\cdots$};
            \draw (2,1.5) -- (4.25,1.5) -- (4.25,2.5);
            \draw (0,1.5) -- (0,0.5);
            \draw (3,1.5) -- (3,0.5);
            \draw[dashed] (0,2.5) -- (0,1.5);
            \draw[dashed] (3,2.5) -- (3,1.5);
            \MPS{0,1.5}{1}{blue!20}{\scriptsize $A^{-1}$}
            \MPS{3,1.5}{1}{blue!20}{\scriptsize $A^{-1}$}
            \MPU{4.25,1.5}{0.75}{green!20}{\scriptsize $K$}
        \end{tikzpicture}
        \end{equation}
        where $K$ is a left inverse $K\sqrt{\omega}=\id_{\mathbb{C}^r}$.  Such a left inverse exists because $\sqrt{\omega}$ has full column rank. By construction and the step-injectivity condition, it satisfies $\tau^{(N)}(A;\sqrt{\omega})\sigma^{(N)}(A;\sqrt{\omega})=\id$.
    \end{proof}
\end{proposition}
We now present the fundamental theorem for sLPDOs generated by step-injective tensors. By Corollary~\ref{cor:purification_freedom}, any two purifications can be connected by an isometry; the step-injectivity condition guarantees that this isometry admits a matrix product form.

\begin{theorem}
\label{thm:sLPDO-FT-1}
    Let $(A,\omega)$ be a pair generating sLPDO with $A$ step-injective. Let $(B,\omega')$ be another pair generating the same sLPDO. Then the unique isometry $U_N$ connecting purifications of Lemma~\ref{lem:purification_partial_isometry} admits an MPI representation and
    \begin{equation}
        \left(\left\{\begin{tikzpicture}[baseline={([yshift=-0.5ex]current bounding box.center)},x=\cell cm,y=\cell cm]
            \draw (1,0) -- (-1,0);
            \draw (-1,-1.25) -- (1,-1.25);
            \draw (0,0) -- (0,1);
            \draw[dashed] (0,-2) -- (0,0);
            \MPS{0,0}{1}{blue!20}{\scriptsize $A$}
            \MPU{0,-1.25}{1}{purple!20}{\scriptsize $U_{[n]}$}
        \end{tikzpicture}\right\},\begin{tikzpicture}[baseline={([yshift=-0.5ex]current bounding box.center)},x=\cell cm,y=\cell cm]
            \draw (-0.75,1.25) -- (0,1.25) -- (0,0) -- (-0.75,0);
            \draw (-0.75,-1.25) -- (0,-1.25) -- (0,-2);
            \MPU{0,1.25}{0.75}{green!20}{\scriptsize $\sqrt{\omega}$}
            \MPU{0,0}{0.75}{green!20}{\scriptsize $K$}
            \MPU{0,-1.25}{0.75}{green!20}{\scriptsize $\sqrt{\omega'}$}
        \end{tikzpicture}\right) \sim_{\ket{\psi}} \left(\begin{tikzpicture}[baseline={([yshift=-0.5ex]current bounding box.center)},x=\cell cm,y=\cell cm]
            \draw (1,0) -- (-1,0);
            \draw (0,0) -- (0,1);
            \draw[dashed] (0,-1) -- (0,0);
            \MPS{0,0}{1}{blue!20}{\scriptsize $B$}
        \end{tikzpicture},\begin{tikzpicture}[baseline={([yshift=-0.5ex]current bounding box.center)},x=\cell cm,y=\cell cm]
            \draw (-0.75,-1.25) -- (0,-1.25) -- (0,-2);
            \MPU{0,-1.25}{0.75}{green!20}{\scriptsize $\sqrt{\omega'}$}
        \end{tikzpicture}\right)
    \end{equation}
    where $U$ is the tensor generating the MPI $U_N$. Concretely, 
    \begin{align}
    \begin{split}
        \begin{tikzpicture}[baseline={([yshift=-0.5ex]current bounding box.center)},x=\cell cm,y=\cell cm]
            \draw (-1,0) -- (1,0);
            \draw[dashed] (0,-1) -- (0,1);
            \MPU{0,0}{1}{purple!20}{\scriptsize $U_{[n]}$}
        \end{tikzpicture}&=
        \begin{tikzpicture}[baseline={([yshift=-0.5ex]current bounding box.center)},x=\cell cm,y=\cell cm]
            \draw (0,0) -- (0,1);
            \draw[dashed] (0,-1) -- (0,0);
            \draw[dashed] (0,2.5) -- (0,1.5);
            \draw (-1,0) -- (1,0);
            \draw (-1,1.5) -- (1,1.5);
            \MPS{0,0}{1}{blue!20}{\scriptsize $B$}
            \MPS{0,1.5}{1}{blue!20}{\scriptsize $A^{-1}$}
        \end{tikzpicture}\,, \qquad \text{for } 1\leq n\leq N-1\,, \\
        \\
        \begin{tikzpicture}[baseline={([yshift=-0.5ex]current bounding box.center)},x=\cell cm,y=\cell cm]
            \draw (-1,0) -- (1,0);
            \draw[dashed] (0,-1) -- (0,1);
            \MPU{0,0}{1}{purple!20}{\scriptsize $U_{[N]}$}
        \end{tikzpicture}&=
        \begin{tikzpicture}[baseline={([yshift=-0.5ex]current bounding box.center)},x=\cell cm,y=\cell cm]
            \draw (0,0) -- (0,1);
            \draw[dashed] (0,-1) -- (0,0);
            \draw[dashed] (0,2.5) -- (0,1.5);
            \draw (-1,1.5) -- (-1,0) -- (1,0);
            \draw (-1,1.5) -- (1,1.5);
            \MPS{0,0}{1}{blue!20}{\scriptsize $B$}
            \MPS{0,1.5}{1}{blue!20}{\scriptsize $A^{-1}$}
        \end{tikzpicture}\,,
    \end{split}
    \end{align}
    which slightly breaks homogeneity.
    
    If additionally $(\{B_{[n]}\},\sqrt{\omega'})$ is in CF, then there exist rectangular matrices $\{Y_{[n]}\}_{n=0}^{N-1}$ and $\{Z_{[n]}\}_{n=1}^N$ with $Y_{[n]}Z_{[n+1]} =\id$ such that 
    \begin{align}
    \begin{split}
            \begin{tikzpicture}[baseline={([yshift=-0.5ex]current bounding box.center)},x=\cell cm,y=\cell cm]
            \draw (-0.75,1.25) -- (0,1.25) -- (0,0) -- (-0.75,0);
            \draw (-0.75,-1.25) -- (0,-1.25) -- (0,-2);
            \draw (-1,0) -- (-2,0);
            \MPU{0,1.25}{0.75}{green!20}{\scriptsize $\sqrt{\omega}$}
            \MPU{0,0}{0.75}{green!20}{\scriptsize $K$}
            \MPU{0,-1.25}{0.75}{green!20}{\scriptsize $\sqrt{\omega'}$}
            \transfermat{-1,0}{3}{0.75}{orange!20}{\scriptsize $Y_{[0]}$}
        \end{tikzpicture}=\begin{tikzpicture}[baseline={([yshift=-0.5ex]current bounding box.center)},x=\cell cm,y=\cell cm]
            \draw (-0.75,-1.25) -- (0,-1.25) -- (0,-2);
            \MPU{0,-1.25}{0.75}{green!20}{\scriptsize $\sqrt{\omega'}$}
        \end{tikzpicture}\,, \qquad \begin{tikzpicture}[baseline={([yshift=-0.5ex]current bounding box.center)},x=\cell cm,y=\cell cm]
            \draw (1,0) -- (-1,0) -- (-1,1);
            \draw (0,-1.25) -- (1,-1.25);
            \draw (0,0) -- (0,1);
            \draw[dashed] (0,-2) -- (0,0);
            \draw (1,-0.625) -- (2,-0.625);
            \MPS{0,0}{1}{blue!20}{\scriptsize $A$}
            \MPU{0,-1.25}{1}{purple!20}{\scriptsize $U_{[N]}$}
            \transfermat{1.25,-0.625}{2}{0.75}{Cyan!20}{\scriptsize $Z_{[N]}$}
        \end{tikzpicture}=\begin{tikzpicture}[baseline={([yshift=-0.5ex]current bounding box.center)},x=\cell cm,y=\cell cm]
            \draw (1,0) -- (-1,0);
            \draw (0,0) -- (0,1);
            \draw[dashed] (0,-1) -- (0,0);
            \MPS{0,0}{1}{blue!20}{\scriptsize $B_{[N]}$}
        \end{tikzpicture}\,, \\
        \begin{tikzpicture}[baseline={([yshift=-0.5ex]current bounding box.center)},x=\cell cm,y=\cell cm]
            \draw (1,0) -- (-1,0);
            \draw (-1,-1.25) -- (1,-1.25);
            \draw (0,0) -- (0,1);
            \draw[dashed] (0,-2) -- (0,0);
            \draw (-1,-0.625) -- (-2,-0.625);
            \draw (1,-0.625) -- (2,-0.625);
            \MPS{0,0}{1}{blue!20}{\scriptsize $A$}
            \MPU{0,-1.25}{1}{purple!20}{\scriptsize $U_{[n]}$}
            \transfermat{-1.25,-0.625}{2}{0.75}{orange!20}{\scriptsize $Y_{[n]}$}
            \transfermat{1.25,-0.625}{2}{0.75}{Cyan!20}{\scriptsize $Z_{[n]}$}
        \end{tikzpicture}=\begin{tikzpicture}[baseline={([yshift=-0.5ex]current bounding box.center)},x=\cell cm,y=\cell cm]
            \draw (1,0) -- (-1,0);
            \draw (0,0) -- (0,1);
            \draw[dashed] (0,-1) -- (0,0);
            \MPS{0,0}{1}{blue!20}{\scriptsize $B_{[n]}$}
        \end{tikzpicture}\,, \qquad \text{for } 1\leq n\leq N-1.
    \end{split}
    \end{align}
    \begin{proof}
        The proof is a straightforward application of corollary~\ref{cor:purification_freedom} and proposition~\ref{prop:inverse_MPO}. Since $A$ is step-injective, $\sigma^{(N)}(A;\sqrt{\omega})$ is minimal. For each system size $N$, there exists a unique isometry $U_N:(\mathbb{C}^{p_B})^{\otimes N}\otimes \mathbb{C}^{r_B}\to (\mathbb{C}^{p_A})^{\otimes N}\otimes \mathbb{C}^{r_A}$ with $U_N U^\dagger_N =\id$ and $ U^\dagger_N U_N=\mathrm{Proj}_{\ker(\sigma^{(N)}(B))^\perp}$ such that
        \begin{equation}
            \begin{tikzpicture}[baseline={([yshift=-0.5ex]current bounding box.center)},x=\cell cm,y=\cell cm]
            \draw (1,0) -- (-1,0) -- (-1,1);
            \node at (1.5,0) {$\cdots$};
            \draw (2,0) -- (4.25,0) -- (4.25,-2.5);
            \draw (0,0) -- (0,1);
            \draw[dashed] (0,-2.5) -- (0,0);
            \draw (3,0) -- (3,1);
            \draw[dashed] (3,-2.5) -- (3,0);
            \MPS{0,0}{1}{blue!20}{\scriptsize $A$}
            \MPS{3,0}{1}{blue!20}{\scriptsize $A$}
            \draw[fill=purple!20] (-0.5,-2) -- (-0.5,-1) -- (4.75,-1) -- (4.75,-2) -- cycle;
            \node at (2.125,-1.5) {\footnotesize $U_N$};
            \MPU{4.25,0}{0.75}{green!20}{\scriptsize $\sqrt{\omega}$}
        \end{tikzpicture}=\begin{tikzpicture}[baseline={([yshift=-0.5ex]current bounding box.center)},x=\cell cm,y=\cell cm]
            \draw (1,0) -- (-1,0) -- (-1,1);
            \node at (1.5,0) {$\cdots$};
            \draw (2,0) -- (4.25,0) -- (4.25,-1);
            \draw (0,0) -- (0,1);
            \draw[dashed] (0,-1) -- (0,0);
            \draw (3,0) -- (3,1);
            \draw[dashed] (3,-1) -- (3,0);
            \MPS{0,0}{1}{blue!20}{\scriptsize $B$}
            \MPS{3,0}{1}{blue!20}{\scriptsize $B$}
            \MPU{4.25,0}{0.75}{green!20}{\scriptsize $\sqrt{\omega'}$}
        \end{tikzpicture}\,.
        \end{equation}
        We can then invert $\sigma^{(N)}(A;\sqrt{\omega})$ using $\tau^{(N)}(A;\sqrt{\omega})$, resulting in an explicit form for the isometry $U_N$:
        \begin{equation}
            \begin{tikzpicture}[baseline={([yshift=-0.5ex]current bounding box.center)},x=\cell cm,y=\cell cm]
            \draw (4,-0.5) -- (4,-2.5);
            \draw[dashed] (0,-0.5) -- (0,-2.5);
            \draw[dashed] (3,-0.5) -- (3,-2.5);
            \draw[fill=purple!20] (-0.5,-2) -- (-0.5,-1) -- (4.5,-1) -- (4.5,-2) -- cycle;
            \node at (2,-1.5) {\footnotesize $U_N$};
        \end{tikzpicture}=\begin{tikzpicture}[baseline={([yshift=-0.5ex]current bounding box.center)},x=\cell cm,y=\cell cm]
            \draw (1,0) -- (-1,0) -- (-1,1.5) -- (1,1.5);
            \node at (1.5,1.5) {$\cdots$};
            \draw (2,1.5) -- (4.25,1.5) -- (4.25,2.5);
            \node at (1.5,0) {$\cdots$};
            \draw (2,0) -- (4.25,0) -- (4.25,-1);
            \draw (0,0) -- (0,1);
            \draw[dashed] (0,-1) -- (0,0);
            \draw (3,0) -- (3,1);
            \draw[dashed] (3,-1) -- (3,0);
            \draw[dashed] (0,2.5) -- (0,1.5);
            \draw[dashed] (3,2.5) -- (3,1.5);
            \MPU{4.25,1.5}{0.75}{green!20}{\scriptsize $K$}
            \MPS{0,0}{1}{blue!20}{\scriptsize $B$}
            \MPS{3,0}{1}{blue!20}{\scriptsize $B$}
            \MPS{0,1.5}{1}{blue!20}{\scriptsize $A^{-1}$}
            \MPS{3,1.5}{1}{blue!20}{\scriptsize $A^{-1}$}
            \MPU{4.25,0}{0.75}{green!20}{\scriptsize $\sqrt{\omega'}$}
        \end{tikzpicture}\,.
        \end{equation}
        This tensor has the form of a (non-homogeneous) MPI with bond dimension $D_U=D^2$, \ie an MPO that generates an isometry for all system sizes. 

        The second statement of the theorem follows from the OBC fundamental theorem of Ref.~\cite[Theorem 2]{MPS_rep}, treating the sLPDO ansatz as an OBC MPS with $N+1$ sites (viewing the initial state $\sqrt{\omega}$ as the extra site). 
    \end{proof}
\end{theorem}

To summarize, with the step-injectivity condition, one arrives at a fundamental theorem for sLPDOs where the two purifications of the same sLPDO are guaranteed to be connected by a matrix product isometry with a finite bond dimension. 

\subsection{Cyclic tensors}
We introduce a second, global condition on the purification tensor. Since it is most naturally expressed in terms of the channel induced by $A$, we introduce some notation. For each pair of physical indices $i,j$, write $\mathcal{E}^{ij}_A(\rho)=\sum_{a}A^{ia}\rho (A^{ja})^\dagger$. Thus $\mathcal{E}^{ij}_A$ is the $(i,j)$-component of the output of $\mathcal{E}_A$.

For each pair $(A,\omega)$, we associate an algebra of maps and a vector space. For a word $w=((i_1,j_1),\dots,(i_\ell,j_\ell))$ of length $\ell$, define $\mathcal{E}^w_A=\mathcal{E}^{i_\ell j_\ell}_A\circ\cdots\circ \mathcal{E}^{i_1 j_1}_A$. We denote $\abs{w}$ as the length of the word, and define 
\begin{equation}
\label{eq:algebra_of_maps}
    \mathcal{A}_A=\mathrm{span}\{\mathcal{E}^w_A\mid \abs{w}\geq 1\}\subseteq \mathrm{End}(\mathcal{M}_D)\,,
\end{equation}
the algebra of all linear combinations of compositions of $\mathcal{E}^{ij}_A$ of arbitrary length. We also define the reachable space
\begin{equation}
    \mathcal{R}_{(A,\omega)}=\mathcal{A}_A\cdot \omega=\mathrm{span}\{\mathcal{E}^w_A(\omega)\mid \abs{w}\geq1\}\subseteq \mathcal{M}_D\,.
\end{equation}

\begin{definition}
    Let $(A,\omega)$ be a pair generating sLPDO. We say that $(A,\omega)$ is cyclic, or $\omega$ is cyclic for $A$, if $\mathcal{R}_{(A,\omega)}=\mathcal{M}_D$.
\end{definition}
This definition of cyclicity is equivalent to one of the two conditions of the previous notion of minimality appearing in the original finitely-correlated state construction~\cite[Proposition~2.1]{FNW}. In that setting, the state has both a left and right boundary, and the minimality condition corresponds to satisfying cyclicity with respect to each boundary. By constrast, the memory is not traced out in the sLPDO construction, so only cyclicity from the right boundary is required. Eq.~\eqref{eq:algebra_of_maps} is the same as the algebra of matrices studied recently in~\cite{MPS_boundary}. In any case, cyclicity of $(A,\omega)$ is weaker than the statement $\mathcal{A}_A=\mathrm{End}(\mathcal{M}_D)$.

Given a pair $(A,\omega)$, it is efficient to check whether it is cyclic. Define
\begin{equation}
    \mathcal{R}^{(\leq \ell)}_{(A,\omega)}=\mathrm{span}\{\mathcal{E}^w_A(\omega)\mid 1\leq \abs{w}\leq \ell\}\subseteq \mathcal{M}_D\,.
\end{equation}
We note two properties of these spaces. First, that $\mathcal{R}^{(\leq \ell)}_{(A,\omega)}$ form a filtration of $\mathcal{R}_{(A,\omega)}$, that is, $\mathcal{R}_{(A,\omega)}=\bigcup_{\ell\geq 1}\mathcal{R}^{(\leq \ell)}_{(A,\omega)}$ and $\mathcal{R}^{(\leq \ell)}_{(A,\omega)}\subseteq \mathcal{R}^{(\leq \ell')}_{(A,\omega)}$ for $\ell\leq \ell'$. Furthermore, if $\mathcal{R}^{(\leq \ell)}_{(A,\omega)}=\mathcal{R}^{(\leq \ell+1)}_{(A,\omega)}$ for some $\ell$, then $\mathcal{R}^{(\leq \ell+k)}_{(A,\omega)}=\mathcal{R}^{(\leq\ell)}_{(A,\omega)}$ for any $k\geq 0$. These subspaces must stabilize after at most $D^2$ steps. Thus, it is sufficient to check whether $\mathcal{R}^{\leq D^2}_{(A,\omega)}=\mathcal{M}_D$.

Now, we show that the cyclic condition has a strong consequence on the fundamental theorem of sLPDOs.
\begin{lemma}
\label{lem:equal_on_words}
    Let $(A,\omega)$ and $(B,\omega')$ be pairs generating the same sLPDO. Then $\mathcal{E}^w_A(\omega)=\mathcal{E}^w_B(\omega')$ for all words $w$. In particular, $\mathcal{R}_{(A,\omega)}=\mathcal{R}_{(B,\omega')}$.
    \begin{proof}
        We rewrite the sLPDO generated by $(A,\omega)$ as
        \begin{equation}
            \rho^{(N)}(A;\omega)=\sum_{\substack{i_1,\dots,i_N \\ j_1,\dots,j_N}}\mathcal{E}^w_{A}(\omega)\otimes\ketbra{i_N,\dots,i_1}{j_N,\dots,j_1}\,,
        \end{equation}
        where $w=((i_1,j_1),\dots,(i_N,j_N))$, and similarly for $B$. The equality of states for all $N$ implies $\mathcal{E}^w_A(\omega)=\mathcal{E}^w_B(\omega')$ for all words $w$, from which the claim follows.
    \end{proof}
\end{lemma}
\noindent In this case, we will denote both reachable spaces by $\mathcal{R}$. What this lemma implies is actually that cyclicity is not a property of the sLPDO representation, but rather a property of the state itself, because all representations have the same reachable space.

\begin{theorem}
    Let $(A,\omega)$ and $(B,\omega')$ be cyclic pairs generating the same sLPDO. Then $B^{ia}=\sum_b U_{ab}A^{ib}$ for some isometry $U$, meaning
    \begin{equation}
        \begin{tikzpicture}[baseline={([yshift=-0.5ex]current bounding box.center)},x=\cell cm,y=\cell cm]
            \draw (1,0) -- (-1,0);
            \draw (0,0) -- (0,1);
            \draw[dashed] (0,-1.75) -- (0,0);
            \MPS{0,0}{1}{blue!20}{\scriptsize $A$}
            \MPU{0,-1.125}{0.75}{purple!20}{\scriptsize $U$}
        \end{tikzpicture}=\begin{tikzpicture}[baseline={([yshift=-0.5ex]current bounding box.center)},x=\cell cm,y=\cell cm]
            \draw (1,0) -- (-1,0);
            \draw (0,0) -- (0,1);
            \draw[dashed] (0,-1) -- (0,0);
            \MPS{0,0}{1}{blue!20}{\scriptsize $B$}
        \end{tikzpicture}\,.
    \end{equation}
    \begin{proof}
        First, we observe that $\mathcal{E}^{ij}_A$ and $\mathcal{E}^{ij}_B$ have equal action on $\mathcal{R}$. Indeed, let $X\in\mathcal{R}$ and write $X=\sum_{w} c_w \mathcal{E}^w_A(\omega)$. Then by lemma \ref{lem:equal_on_words},
        \begin{equation}
            \mathcal{E}^{ij}_A(X)=\sum_w c_w \mathcal{E}^{w(ij)}_A(\omega)=\sum_wc_w\mathcal{E}^{w(ij)}_B(\omega')=\mathcal{E}^{ij}_B(X)\,.
        \end{equation}
        By cyclicity, $\mathcal{R}=\mathcal{M}_D$, so $\mathcal{E}^{ij}_A=\mathcal{E}^{ij}_B$ as linear maps. Thus $\mathcal{E}_A=\mathcal{E}_B$ as CP maps. The freedom in the Kraus representation gives an on-site isometry relating purification tensors in the ancilla index.
    \end{proof}
\end{theorem}
To summarize, with the cyclic condition, the two purifications of the same sLPDO are guaranteed to be connected by an on-site isometry. 

It is interesting that the cyclic condition implies a much stronger fundamental theorem than the one for step-injective tensors, that the tensors are related by an on-site isometry in the ancillary space. These two conditions are independent and are neither necessary nor sufficient for each other -- we present some examples shortly in section~\ref{sec:examples}. Both conditions are expected to be generic. 

\section{Examples of sLPDOs}
\label{sec:examples}
Having presented the two fundamental theorems of sLPDO under two different conditions, we now
provide classes of interesting and relevant quantum states that fall under the sLPDO ansatz. We also comment on their step-injective and cyclic properties in light of the fundamental theorems presented in the previous section.

\subsection{Boundaries of topological theories}
A useful tool to study topologically ordered phases~\cite{levin2005string-net} is the holographic encoding of a bulk theory into a boundary theory of one dimension lower. In the case of two-dimensional PEPS, the boundary theory arises naturally on the virtual space exposed by a bipartite cut \cite{PEPS_entanglement_Hamiltonian,PEPS_TO_transfer_matrix}. This gives rise to an important class of one-dimensional mixed states consisting of boundaries of two-dimensional topologically-ordered systems. We show that the sLPDO ansatz is sufficiently general to capture the boundary of $D(G)$ topological order, where $G$ is a finite group, offering a new perspective in which such boundary theories are understood as those which are sequentially preparable mixed states without postselection.

The construction of a PEPS realizing the Kitaev quantum double $D(G)$ is detailed in \cite{PEPS_Ginjective}. The PEPS tensor itself obeys a local symmetry, and it is this symmetry that implies that the boundary on the virtual space of such a state is simply a projector onto the $G$-invariant subspace, equivalently expressed as
\begin{equation}
    \rho^{(N)}_G=\frac{1}{\abs{G}^N}\sum_{h\in G}\sum_{g_1,\dots,g_N\in G}\ketbra{hg_N,\dots,hg_1}{g_N,\dots,g_1}.
\end{equation}
Note that the summands are precisely the matrix representation of the site-wise left multiplication operator $L_h^{\otimes N}$ defined by $L_h\ket{g}=\ket{hg}$, so the state can also be written as 
\begin{equation}
\label{eq:rho_G_multiplication}
    \rho^{(N)}_G=\frac{1}{\abs{G}^N}\sum_{h\in G}L_h^{\otimes N}\,.
\end{equation} For details of this construction, we refer to appendix~\ref{app:boundary}. 

The state $\rho^{(N)}_G$ admits an exact representation as a bond dimension $D=\abs{G}$ sLPDO with purification tensor $A^{ia}_{\alpha\beta}=\frac{1}{\sqrt{\abs{G}}}\delta^{i,\alpha a}\delta_{\alpha\beta}$ with $i,a,\alpha,\beta\in G$, and initial state $\omega=\ketbra{+_G}{+_G}$, where $\ket{+_G}=\frac{1}{\sqrt{\abs{G}}}\sum_{x}\ket{x}$. To see this, the sequential channel is calculated to be $\mathcal{E}_A(\ketbra{x}{y})=\frac{1}{\abs{G}}\sum_{g}\ketbra{x}{y} \otimes\ketbra{xg}{yg}$. Iterating this $N$ times yields the state
\begin{equation}
\begin{aligned}
    \rho^{(N)}(A;\omega)&=\frac{1}{\abs{G}^{N+1}}\sum_{x,y}\sum_{g_1,\dots,g_N} \ketbra{x}{y}\otimes \\
    &\phantom{={}}\ketbra{xg_N,\dots,xg_1}{yg_N,\dots,yg_1}\,.
\end{aligned}
\end{equation}
from which the result follows after relabelling $y=g_0$ and $h=xy^{-1}$. Intuitively, the memory carries a fixed group element and the channel applies it uniformly to each physical index. Thus for any system size $N$, the state generated on $N$ qu-$\abs{G}$-its and memory is the boundary of the $D(G)$ topological order on $N+1$ sites, $\rho^{(N)}(A;\omega)=\rho^{(N+1)}_G$.

We note that the tensor $A$ generating the boundary of $D(G)$ topological order is step-injective with inverse $(A^{-1})^{ia}_{\alpha\beta}=\sqrt{\abs{G}}\delta^{\alpha^{-1}i,a}\delta_{\alpha,\beta}$. However, starting from the state $\omega=\ketbra{+_G}{+_G}$, $(A,\omega)$ is not cyclic, because it only explores the space spanned by the left multiplication operator, $\mathcal{R}_{(A,\omega)}=\mathrm{span}\{L_h=\sum_{y}\ketbra{hy}{y}\mid h\in G\}$. This is a vector space of dimension $\dim \mathcal{R}_{(A,\omega)}=\abs{G}<\abs{G}^2=\dim \mathcal{M}_{\abs{G}}$.

In the case of an abelian group $G$, there is an alternative choice of basis in which $\rho^{(N)}_G$ is diagonal. We briefly outline the construction in appendix~\ref{app:abelian_boundary_TO}, relying on the properties of characters of finite abelian groups. The upshot is that after a suitable basis change, the state can be written
\begin{equation}
    \tilde{\rho}^{(N)}_G:=\frac{1}{\abs{G}^{N-1}}\sum_{\substack{i_1,\dots,i_N\in G \\ i_1\cdots i_N=e}}\ketbra{i_N,\dots, i_1}{i_N,\dots, i_1}\,,
\end{equation}
an incoherent uniform mixture over neutral-charge strings. Here $e\in G$ is the multiplicative unit.

As an explicit example, we consider the boundary of the toric code, with
\begin{equation}
    \tilde{\rho}_{\mathbb{Z}_2}^{(N)}=\frac{1}{2^N}(I^{\otimes N}+Z^{\otimes N})\,.
\end{equation}
It is generated by the tensor $A^{00}=\frac{1}{\sqrt{2}}I$, $A^{01}=0$, $A^{10}=0$, $A^{11}=\frac{1}{\sqrt{2}}X$ and initial state $\omega=\ketbra{0}{0}$. The action of the channel is given by $\mathcal{E}^{00}_A(\rho)=\frac{1}{2}\rho$, $\mathcal{E}^{11}_A(\rho)=\frac{1}{2}X\rho X$.

There exists another purification tensor $B$ generating the toric code boundary, with ancilla dimension $p_B=4$. Its non-zero components are $B^{00}=\frac{1}{\sqrt{2}}\ketbra{0}{0}$, $B^{01}=\frac{1}{\sqrt{2}}\ketbra{1}{1}$, $B^{12}=\frac{1}{\sqrt{2}}\ketbra{1}{0}$ and $B^{13}=\frac{1}{\sqrt{2}}\ketbra{0}{1}$. Before saying which initial state $\omega'$ it acts on, we first note that the action of the channel on the memory state is $\mathcal{E}^{00}_B(\rho)=\frac{1}{2}D(\rho)$ and $\mathcal{E}^{11}_B(\rho)=\frac{1}{2}XD(\rho) X$, where $D(\rho)$ is the completely dephasing channel $D(\rho)=\mel{0}{\rho}{0}\ketbra{0}{0}+\mel{1}{\rho}{1}\ketbra{1}{1}$. Therefore, it can generate the toric code boundary from any initial state with the property that $\mel{1}{\omega'}{1}=0$, which, in the 2-dimensional case, forces $\omega'=\ketbra{0}{0}$. This tensor is neither step-injective nor cyclic, but is related to the tensor $A$ above by an MPI by theorem~\ref{thm:sLPDO-FT-1}. 

\subsection{Reduced translationally-invariant MPS states}
All sLPDOs can be interpreted as a reduced state of an MPS tensor, by writing a factorization of the purification tensor $A^{ia}=B^i C^a$. Restricting to the cases where $B=C$ forms a subclass of homogeneous sLPDOs:
\begin{equation}
    \begin{tikzpicture}[baseline={([yshift=-0.5ex]current bounding box.center)},x=\cell cm,y=\cell cm]
        \draw (-1,0) -- (1,0);
        \draw[dashed] (0,-1) -- (0,0);
        \draw (0,0) -- (0,1);
        \MPS{0,0}{1}{blue!20}{\scriptsize $A$} 
    \end{tikzpicture}
    =\begin{tikzpicture}[baseline={([yshift=-0.5ex]current bounding box.center)},x=\cell cm,y=\cell cm]
        \draw (-1,0) -- (2.5,0);
        \draw (0,0) -- (0,1);
        \draw[densely dashed] (1.5,0) -- (1.5,-1);
        \MPS{0,0}{1}{blue!20}{\scriptsize $C$} 
        \MPS{1.5,0}{1}{blue!20}{\scriptsize $C$}
    \end{tikzpicture}\,.
\end{equation}

First, let $C^i\in\mathcal{M}_D$ be a tensor generating MPS. The definition $A^{ia}=C^iC^a$ implies that $\mathrm{rank}(A)\leq D$, as viewed as a map $A:\mathbb{C}^d\otimes \mathbb{C}^D\to \mathbb{C}^d\otimes \mathbb{C}^D$. In particular, because step-injectivity requires full rank, no reduced translationally-invariant MPS state can be step-injective in the sense of definition~\ref{def:injectivity_sLPDO}.

A sufficient condition for cyclicity of $A$ with any non-zero initial state $\omega$ is the injectivity of $C$ as an MPS. Recall that injectivity for an MPS means that $\mathrm{span}\{C^i\}=\mathcal{M}_D$. We claim that if $C$ is an injective tensor, then $\mathrm{span}\{\mathcal{E}_A^{ij}(\omega)\mid i,j\in\{0,\dots,d-1\}\}=\mathcal{M}_D$ for any non-zero $\omega\geq 0$. To show this, it is sufficient to show that any rank-1 element $\ketbra{u}{v}\in\mathcal{M}_D$ lies within the span. Define $\mathcal{E}_C(\rho)=\sum_i C^i\rho(C^i)^\dagger$ to be the transfer matrix of $C$ acting on $\omega$. Take any non-zero matrix element $\mel{x}{\mathcal{E}_C(\omega)}{y}$ of the transfer matrix and consider $|u\rangle\!\!\mel{x}{\mathcal{E}_C(\omega)}{y}\!\!\langle v|$.
Note that $\ketbra{u}{x},\ketbra{v}{y}\in\mathrm{span}\{C^i\}$. Hence the span of $\mathcal{E}^{ij}_A(\omega)=C^i\mathcal{E}_C(\omega)(C^j)^\dagger$ contains $\ketbra{u}{v}$, showing that $\mathcal{R}^{(\leq 1)}_{(A,\omega)}=\mathcal{M}_D$ and that the pair $(A,\omega)$ is cyclic. 

\section{Beyond sLPDOs}
\label{sec:counterexample}
Having discussed a class of tensors where a fundamental theorem was successful, we explore examples of LPDOs which go beyond the sLPDO ansatz. We present an important example which illustrates that two valid purifications cannot always be connected by MPIs. Both purifications are minimal, suggesting that any hope for a fundamental theorem relating the tensors locally requires a further reduction in terms of a canonical form, or is outright impossible. In any case, these instructive examples help us to understand the obstructions to a full fundamental theorem for PBC LPDOs.

In appendix~\ref{app:CCDO}, we additionally investigate a class of uniform LPDOs with diagonal boundary, those forming convex combinations of MPS pure states. We identify a subclass of these tensors for which a full fundamental theorem can be proved.

\subsection{Counterexamples to MPI connecting purifications}
We present a class of uniform LPDOs (\ie of type Eq.\@~\eqref{eq:LPDO} with PBC) generating the same density matrix, $\rho^{(N)}(A)=\rho^{(N)}(B)$, yet the unitary connecting the purifications can be shown to have Schmidt rank that is exponential in system size across some bipartition. This is sufficient to show that such a unitary cannot be an MPU. The strategy will be to pick a purification that is normal, \ie $[\sigma^{(N)}(A),\sigma^{(N)}(A)^\dagger]=0$ and $\sigma^{(N)}(B)=\sigma^{(N)}(A)^\dagger$. We achieve this by choosing a block-diagonal purification tensor 
\begin{equation}
    A^{ia}=\begin{pmatrix}
    \delta^{ia} & 0 \\
    0 & \epsilon W^{ia}
\end{pmatrix}\,,
\end{equation} 
where $W^{ia}_{\alpha\beta}=\delta^i_\alpha \delta^a_\beta$ is the tensor generating the right shift operator $S$. Here tensor $A$ has dimension $d=p$ and bond dimension $D=d+1$. The purification generated by $A$ is $\sigma^{(N)}(A)=\id+\epsilon^N\, S$, and taking $B^{ia}=\overline{A}{}^{ai}$ the one for $B$ is $\sigma^{(N)}(B)=\id+(\epsilon^*)^N\, S^\dagger$. The density operator generated is $\rho^{(N)}(A)=(1+\abs{\epsilon}^{2N})\id+\epsilon^NS+(\epsilon^*)^NS^\dagger$.

When $\abs{\epsilon}\neq1$, the purifications are invertible and the inverse is given by $\sigma^{(N)}(A)^{-1}=\frac{1}{1-(-\epsilon^N)^N}\sum_{k=0}^{N-1}(-\epsilon^N)^kS^k$. The unitary relating the purifications is unique and given by
\begin{align}
\label{eq:counterexample_coefficients}
\begin{split}
    U=\sigma^{(N)}(A)^{-1}\sigma^{(N)}(B)=\sum_{k=0}^{N-1}u_kS^k\,, \\
    u_k=\begin{cases}
        \frac{(-\epsilon^N)^k(1-\abs{\epsilon}^{2N})}{1-(-\epsilon^N)^N}\,, & k=0,\dots,N-2 \\
        \frac{(-\epsilon^N)^{N-1}+(\epsilon^*)^N}{1-(-\epsilon^N)^N}\,, & k=N-1
    \end{cases}\,.
\end{split}
\end{align}
The rest of the section is devoted to showing that such a unitary has Schmidt rank growing exponentially in $N$ at half-bipartition. After, we analyze the $\abs{\epsilon}=1$ case.

Suppose for simplicity that $N$ is even and $N\geq 4$. From~\eqref{eq:counterexample_coefficients}, the $u_{N/2}$ term is non-zero. We will use the exponential Schmidt rank of $S^{N/2}$ to lower bound the Schmidt rank of $U$. We judiciously define subspaces $V_{\text{in}},V_{\text{out}}\subseteq(\mathbb{C}^{d})^{\otimes N}$ by
\begin{align}
\begin{split}
    V_{\text{in}}=\mathrm{span}\{|\underbrace{0\cdots 0}_{N/2}1\vb{a}1\rangle\mid \vb{a}\in\{0,1,\dots d-1\}^{N/2-2}\}\,, \\
    V_{\text{out}}=\mathrm{span}\{|1\vb{a}1\underbrace{0\cdots 0}_{N/2}\rangle\mid \vb{a}\in\{0,1,\dots d-1\}^{N/2-2}\}\,,
\end{split}
\end{align}
which have the property that $P_{\text{out}}S^kP_{\text{in}}=0$ when $k\neq N/2$, where $P_{\text{in}/\text{out}}$ is the projection onto $V_{\text{in}/\text{out}}$. In particular, $P_{\text{out}}UP_{\text{in}}=u_{N/2}P_{\text{out}}S^{N/2}P_{\text{in}}$. We compute
\begin{equation}
    P_{\text{out}}S^{N/2}P_{\text{in}}=\hspace{-5mm}\sum_{\vb{a}\in\{0,1,\dots d-1\}^{N/2-2}}\hspace{-5mm}\ketbra{1\vb{a}1}{0}^{\otimes N/2}\otimes |0\rangle^{\otimes N/2}\langle 1\vb{a}1|\,,
\end{equation}
which is already a Schmidt decomposition with $d^{N/2-2}$ terms. Thus $\mathrm{SR}(P_{\text{out}}UP_{\text{in}})=d^{N/2-2}$. 

The crucial observation now is that $P_{\text{in}/\text{out}}$ factorizes across the half-chain bipartition; let us denote it by $P_{\text{in}}=P_0\otimes P_{1\vb{a}1}$ and $P_{\text{out}}=P_{1\vb{a}1}\otimes P_0$. Suppose $U$ has Schmidt decomposition
\begin{align}
    U&=\sum_{k=1}^r A_k\otimes B_k\,, \\
    \implies P_{\text{out}}UP_{\text{in}}&=\sum_{k=1}^r P_{1\vb{a}1}A_kP_0\otimes P_0B_kP_{1\vb{a}1}\,,
\end{align}
which implies that $d^{N/2-2}=\mathrm{SR}(P_{\text{out}}UP_{\text{in}})\leq r=\mathrm{SR}(U)$.

This counterexample gives an exact lower bound on the Schmidt rank of the global ancilla unitary connecting two purifications for any finite $N$. Unfortunately, the result trivializes in the thermodynamic limit, because we ask for $\abs{\epsilon}\neq 1$ and the states become trivial, $\sigma^{(N)}(A)\propto \id$ or $\sigma^{(N)}(A)\propto S$. 

At the special point $\abs{\epsilon}=1$, invertibility of $\sigma^{(N)}(A)$ is not guaranteed, and the argument presented above fails. However, the purifications are simple enough to notice that the unitary
\begin{equation}
    U=\epsilon^N S
\end{equation}
is always a solution to $\sigma^{(N)}(A)=\sigma^{(N)}(B)U$, thus $U$ has an MPU representation. The subtlety is that when $\sigma^{(N)}(A)$ is not invertible, $U$ is not the canonical partial isometry $U_0$ of lemma~\ref{lem:purification_partial_isometry}, but rather $U_0=U\, \mathrm{Proj}_{\ker(\sigma^{(N)}(A))^\perp}$. The kernel of $\sigma^{(N)}(A)$ is precisely the eigenspace of $S$ with eigenvalue $\lambda=-1/\epsilon^N$. One verifies that the projector onto any $\lambda$-eigenspace of $S$ is given by
\begin{equation}
    P_\lambda=\frac{1}{N}\sum_{k=0}^{N-1}\lambda^{-k}S^k\,.
\end{equation}
Then $U_0=\epsilon^N S(\id-P_\lambda)=\epsilon^N S+P_\lambda$. A parallel argument to the $\abs{\epsilon}\neq 1$ case shows that $U_0$ has exponential Schmidt rank across the half-chain bipartition, thereby showing that $U_0$ cannot be represented as a tensor network. This illustrative example shows that singular purifications require more care: while the canonical partial isometry $U_0$ can have exponential Schmidt rank, there could exist a unitary extension $U$, still satisfying $A=BU$, which is a unitary relating purifications that is an MPU.

We stress that this counterexample does not preclude the existence of a fundamental theorem. Rather, it is intended to serve as a pedagogical example illustrating two points i) without further reduction procedure, a fundamental theorem is simply not possible, akin to Jordan-block obstructions in MPS, ii) additional subtleties arise when purifications are not full-rank.

\section{Implications on mixed-state SPT phases}
\label{sec:SPT}
In this section, we discuss the implications of the sLPDO fundamental theorem on mixed-state symmetry-protected topological (SPT) phases. In this setting, we consider a many-body mixed state $\rho^{(N)}$ that is invariant under the action of an on-site symmetry $(U_{\mathrm{p}})^{\otimes N}$. Such symmetry is called a weak symmetry if $\rho^{(N)}$ is invariant in the sense that~\cite{SPT_mixed,ma2023average}
\begin{equation}
    U_{\mathrm{p}}^{\otimes N} \rho^{(N)} (U_{\mathrm{p}}^\dagger)^{\otimes N} = \rho^{(N)}.
\end{equation}
In the literature~\cite{SPT_mixed,ma2023average}, the assumption was made that when acting on the purification $\sigma^{(N)}$, the symmetry representation of the purified degrees of freedom is on-site, in the sense that
\begin{equation}
     U_{\mathrm{p}}^{\otimes N} \sigma^{(N)} = \sigma^{(N)}U_{\mathrm{a}}^{\otimes N}. 
\end{equation}
With this assumption, one can work with the vectorized tensor of $A$ and identify non-trivial SPT phases with the nontrivial 2-cocycle. However, as our main result theorem \ref{thm:sLPDO-FT-1} shows, it could be that the invariance of the state $\rho^{(N)}$ under a $U_{\mathrm{p}}^{\otimes N}$ implies that an MPU acts on the purification dimensions instead of an on-site $U_{\mathrm{a}}^{\otimes N}$. We provide examples of mixed states that satisfy the more general intertwining relation
\begin{equation}
\label{eq:MPU_intertwiner}
    U_{\mathrm{p}}^{(N)}\sigma^{(N)}=\sigma^{(N)}U_{\mathrm{a}}^{(N)}\,,    
\end{equation}
where both $U_{\mathrm{p}}^{(N)}$ and $U_{\mathrm{a}}^{(N)}$ are MPUs. This gives states which could potentially exhibit a non-trivial mixed phase of matter.

\subsection{Pulling-through symmetries of sLPDOs}
Homogeneous solutions to Eq.~\eqref{eq:MPU_intertwiner} with $U_{\mathrm{p}}^{(N)}=U_{\mathrm{a}}^{(N)}$, generated by a purification tensor $A$ with $d=D=p$ and $\sqrt{\omega}=\id$ satisfy
\begin{equation}
\label{eq:MPO_pulling_through}
    \begin{tikzpicture}[baseline={([yshift=-0.5ex]current bounding box.center)},x=\cell cm,y=\cell cm]
            \draw (1,0) -- (-1.25,0) -- (-1.25,2);
            \node at (1.5,0) {$\cdots$};
            \draw (2,0) -- (5.5,0) -- (5.5,-1);
            \draw (0,0) -- (0,2);
            \draw (0,-1) -- (0,0);
            \draw (4.5,0) -- (4.5,2);
            \draw (4.5,-1) -- (4.5,0);
            \draw (3,0) -- (3,2);
            \draw (3,-1) -- (3,0);
            \draw (-2, 1.25) -- (1, 1.25);
            \draw (2, 1.25) -- (5.25, 1.25);
            \node at (1.5,1.25) {$\cdots$};
            \MPS{0,0}{1}{blue!20}{\scriptsize $A$}
            \MPS{3,0}{1}{blue!20}{\scriptsize $A$}
            \MPS{4.5,0}{1}{blue!20}{\scriptsize $A$}
            \MPU{-1.25,1.25}{1}{yellow!20}{}
            \MPU{0,1.25}{1}{yellow!20}{}
            \MPU{3,1.25}{1}{yellow!20}{}
            \MPU{4.5,1.25}{1}{yellow!20}{}
            \draw (-2,1.25) to [out=180,in=180] (-2,0.75);
            \draw (5.25,1.25) to [out=0,in=0] (5.25,0.75);
        \end{tikzpicture}=\begin{tikzpicture}[baseline={([yshift=-0.5ex]current bounding box.center)},x=\cell cm,y=\cell cm]
            \draw (1,0) -- (-1,0) -- (-1,1);
            \node at (1.5,0) {$\cdots$};
            \draw (2,0) -- (5.75,0) -- (5.75,-2);
            \draw (0,0) -- (0,1);
            \draw (0,-2) -- (0,0);
            \draw (4.5,0) -- (4.5,1);
            \draw (4.5,-2) -- (4.5,0);
            \draw (3,0) -- (3,1);
            \draw (3,-2) -- (3,0);
            \draw (-0.75, -1.25) -- (1, -1.25);
            \draw (2, -1.25) -- (6.5, -1.25);
            \node at (1.5,-1.25) {$\cdots$};
            \MPS{0,0}{1}{blue!20}{\scriptsize $A$}
            \MPS{3,0}{1}{blue!20}{\scriptsize $A$}
            \MPS{4.5,0}{1}{blue!20}{\scriptsize $A$} 
            \MPU{5.75,-1.25}{1}{yellow!20}{}
            \MPU{0,-1.25}{1}{yellow!20}{}
            \MPU{3,-1.25}{1}{yellow!20}{}
            \MPU{4.5,-1.25}{1}{yellow!20}{}
            \draw (-0.75,-1.25) to [out=180,in=180] (-0.75,-1.75);
            \draw (6.5,-1.25) to [out=0,in=0] (6.5,-1.75);
        \end{tikzpicture}\,.
\end{equation}
An elegant class of solutions comes from tensors and MPUs satisfying the so-called pulling-through condition~\cite{pulling_through}:
\begin{equation}
    \begin{tikzpicture}[baseline={([yshift=-0.5ex]current bounding box.center)},x=\cell cm,y=\cell cm]
            \draw (1,0) -- (-2,0);
            \draw (0,-1) -- (0,2);
            \draw (0.75,2) -- (-2,-0.75);
            \MPS{0,0}{1}{blue!20}{\scriptsize $A$}
            \MPU{0,1.25}{1}{yellow!20}{}
            \MPU{-1.25,0}{1}{yellow!20}{}
        \end{tikzpicture}=\begin{tikzpicture}[baseline={([yshift=-0.5ex]current bounding box.center)},x=\cell cm,y=\cell cm]
            \draw (2,0) -- (-1,0);
            \draw (0,-2) -- (0,1);
            \draw (-0.75,-2) -- (2,0.75);
            \MPS{0,0}{1}{blue!20}{\scriptsize $A$}
            \MPU{0,-1.25}{1}{yellow!20}{}
            \MPU{1.25,0}{1}{yellow!20}{}
        \end{tikzpicture}\,,
\end{equation}
which was originally introduced as an algebraic identity for tensors that describe topological order and SPT. Vectorizing Eq.~\eqref{eq:MPO_pulling_through} and viewing it as an MPS equation, we see that the operator $U_{\mathrm{p}}^{(N)}\otimes (U_{\mathrm{p}}^{(N)})^*$ is a symmetry of the purification. We expect that the relative anomaly class for these operators to vanish, so one cannot claim that these states give rise to interesting phases with certainty. In the next two examples with $U_{\mathrm{p}}^{(N)}\neq U_{\mathrm{a}}^{(N)}$, the relative anomaly classes are non-trivial and could give rise to states in a non-trivial phase.

\subsection{sLPDO with weak $\mathbb{Z}_2$ symmetry}
\label{sec:Z_2_sLPDO}
We construct a homogeneous step-injective sLPDO with an on-site $U_{\mathrm{p}}^{(N)}=Z^{\otimes N}$ physical weak symmetry and an MPU $U_{\mathrm{a}}^{(N)}=(-1)^{N/2}CZY^{(N)}$ in the purification indices, where $CZY^{(N)}=\left(\prod_{i=1}^N CZ_{i,i+1}\right) Y^{\otimes N}$ is known to be anomalous. The example is based on a `bare' structure which exhibits the requirements of an intertwining relation, but its density operator is a commuting product of nearest-neighbor terms, which in fact is a thermal state of the Ising chain, which is considered in the trivial phase. In appendix~\ref{app:Z_2_sLPDO}, we show that we can move away from this state by `dressing' the bare state with an MPO commuting with the on-site $Z^{\otimes N}$ symmetry to obtain a family of possibly non-trivial states. Appendix~\ref{app:Z_2_sLPDO} also contains further details of the bare state. In the following, all operations are in $\mathbb{F}_2$, performed with binary addition and binary multiplication.

We start by constructing the purification tensor $A^{\{2\}}$ of the bare state. We proceed first by introducing a purification tensor $T$ which has $d=p=2$ and two memory qubits $D=4$ from which we will construct $A$, and $A^{\{2\}}$ is obtained by blocking every two sites. We introduce the tensor
\begin{align}
    T^{ia}_{(c',y)(c,x)}&=\lambda_{xa}(-1)^{c(xa+a)}\delta_{ya}\delta_{c'c}\delta_{i,x} \nonumber \\
    &=\lambda_{xa}(-1)^{c(xa+a)}\ \begin{tikzpicture}[baseline={([yshift=-0.5ex]current bounding box.center)},x=\cell cm,y=\cell cm]
            \begin{knot}[clip width=3]
                \strand (-1,0.1) -- (1,0.1);
                \strand (1,-0.1) .. controls +(-1,0) and +(0,-1) .. (0,1);
            \end{knot}
            \draw (-1,-0.1) .. controls +(1,0) and +(0,1) .. (0,-1);
            \node at (0,1.25) {\tiny $i$};
            \node at (0,-1.25) {\tiny $a$};
            \node at (1.25,0.1) {\tiny $c$};
            \node at (1.25,-0.1) {\tiny $x$};
            \node at (-1.25,0.2) {\tiny $c'$};
            \node at (-1.25,-0.1) {\tiny $y$};
        \end{tikzpicture}\,,
\end{align}
where
\begin{equation}
    \lambda_{xa}=\begin{cases}
        \sqrt{\frac{1+\mu}{2}}\,, & a=x \\
        \sqrt{\frac{1-\mu}{2}}\,, & a\neq x
    \end{cases}
\end{equation}
for $0<\abs{\mu}<1$ is some non-zero weighting. The boundary is chosen to be $(\sqrt{\tau})_{(c',y),(c,a_0)}=\frac{1}{\sqrt{2}}\lambda_{ca_0}(-1)^c \delta_{c'c}\delta_{y,a_0}$. 

Because the final memory is a physical site in the sLPDO ansatz, it is convenient to block every two sites to form the tensor $T^{\{2\}}$, which now has $d=p=D=4$. 

The tensor $A$ is obtained from $T$ by conjugating the two virtual qubits by a $CZ$ gate, and then performing a Hadamard rotation on all qubits except the purification qubits. That is,
\begin{align}
    A&=((H\otimes H)_{\mathrm{p}}\otimes F_{\mathrm{virtual}})T(\id_{\mathrm{a}}\otimes F_{\mathrm{virtual}}^\dagger)\,, \\
    F_{\mathrm{virtual}}&=(H\otimes H)CZ\,.
\end{align}
Then, $A^{\{2\}}$ is obtained by blocking every two sites of $A$. In components, this reads
\begin{align}
\label{eq:Z_2_sLPDO_purification_tensor}
    (A^{\{2\}})^{(i_1,i_2)(a_1,a_2)}_{(c',y)(c,x)}&=\lambda_{ra_1}\lambda_{a_1a_2}(-1)^{\gamma(ra_1+a_1+a_1a_2+a_2)} \nonumber \\
    \phantom{={}}&\times\begin{tikzpicture}[baseline={([yshift=-0.5ex]current bounding box.center)},x=\cell cm,y=\cell cm]
            \begin{knot}[clip width=3, clip radius=3pt, end tolerance=1pt]
                \strand (-3.25,0.3) -- (1.75,0.3);
                \strand (1,-0.3) .. controls +(-1,0) and +(0,-1) .. (0,1);
                \strand (-0.75,-0.3) .. controls +(-0.8,0) and +(0,-1) .. (-1.5,1);
            \end{knot}
            \draw (-0.75,-0.3) .. controls +(0.75,0) and +(0,0.8) .. (0,-1);
            \draw (-2.5,-0.3) .. controls +(1,0) and +(0,0.8) .. (-1.5,-1);
            \draw (-3.25,-0.3) -- (-2.5,-0.3);
            \draw (1.75,-0.3) -- (1,-0.3);
            \MPS{0,0.75}{0.5}{white}{}
            \MPS{-1.5,0.75}{0.5}{white}{}
            \MPS{1.25,0.3}{0.5}{white}{}
            \MPS{-2.75,0.3}{0.5}{white}{}
            \MPS{1.25,-0.3}{0.5}{white}{}
            \MPS{-2.75,-0.3}{0.5}{white}{}
            \draw (0,1) -- (0,1.25);
            \draw (-1.5,1) -- (-1.5,1.25);
            \filldraw (0.75,0.3) circle (1.5pt);
            \filldraw (0.75,-0.3) circle (1.5pt);
            \filldraw (-2.25,0.3) circle (1.5pt);
            \filldraw (-2.25,-0.3) circle (1.5pt);
            \draw (0.75,0.3) -- (0.75,-0.3);
            \draw (-2.25,0.3) -- (-2.25,-0.3);
            \node at (-0.75, 0.45) {\tiny $\gamma$};
            \node at (0.5, -0.1) {\tiny $r$};
            \node at (0,1.5) {\tiny $i_1$};
            \node at (-1.5,1.5) {\tiny $i_2$};
            \node at (2,0.3) {\tiny $c$};
            \node at (2,-0.3) {\tiny $x$};
            \node at (-3.5,0.3) {\tiny $c'$};
            \node at (-3.5,-0.3) {\tiny $y$};
            \node at (0,-1.25) {\tiny $a_1$};
            \node at (-1.5,-1.25) {\tiny $a_2$};
        \end{tikzpicture}\,,
\end{align}
where we labeled also the internal summed $\gamma$ and $r$ indices and
\begin{equation}
    H=\begin{tikzpicture}[baseline={([yshift=-0.5ex]current bounding box.center)},x=\cell cm,y=\cell cm]
            \draw (-0.5,0) -- (0.5,0);
            \MPS{0,0}{0.5}{white}{}
        \end{tikzpicture}, \qquad CZ_{i,i+1}=\begin{tikzpicture}[baseline={([yshift=-0.5ex]current bounding box.center)},x=\cell cm,y=\cell cm]
            \filldraw (0,0.3) circle (1.5pt);
            \filldraw (0,-0.3) circle (1.5pt);
            \draw (0,0.3) -- (0,-0.3);
            \draw (-0.5,0.3) -- (0.5,0.3);
            \draw (-0.5,-0.3) -- (0.5,-0.3);
        \end{tikzpicture}\,.
\end{equation}
We will also do the same to the boundary:
\begin{equation}
    \sqrt{\omega}=(H\otimes H)CZ\sqrt{\tau}\,.
\end{equation}

The tensor $A^{\{2\}}$ is step-injective: i) as a map, $T:\mathbb{C}^p\otimes \mathbb{C}^D\to \mathbb{C}^d\otimes \mathbb{C}^D$ is a weighted permutation; ii) all weights are non-zero and so $T$ and therefore $T^{\{2\}}$ is invertible; iii) $A^{\{2\}}$ is unitarily related to $T^{\{2\}}$ and is also invertible. Furthermore, $\sqrt{\omega}$ is full-rank. Therefore, when this representation is compared with any other sLPDO representation generating the same density operators for all system sizes, the step-injective fundamental theorem~\ref{thm:sLPDO-FT-1} applies and the corresponding purification unitary admits an MPI representation.

We fix a size convention before proceeding. Let $M$ be the number of $A^{\{2\}}$ tensors in the representation, so $\sigma^{(M)}(A,\sqrt{\omega})$ is a purification on $M+1$ sites with local Hilbert space dimension $4$. We may, by an abuse of notation, interpret $\sigma^{(M)}$ also as a purification on $2M+2$ sites with local Hilbert space dimension $2$. Let $i_k$ be the physical index carrying the purification index $a_k$.

We claim that the purification tensor $\sigma^{(M)}(A,\sqrt{\omega})$ satisfies 
\begin{equation}
    Z^{\otimes(2M+2)}\sigma^{(M)}=(-1)^{M+1}\sigma^{(M)}CZY^{(2M+2)}\,,
\end{equation}
whose computational details are left to appendix~\ref{app:Z_2_sLPDO}. The corresponding density matrix, written in local Hilbert space dimension $d=2$, is a commuting nearest-neighbor product
\begin{align}
\label{eq:Z_2_sLPDO_state}
    \rho^{(M)}&=\sigma^{(M)}{\sigma^{(M)}}^\dagger \nonumber \\
    &=\frac{1}{2^{2M+2}}\left(\id+\mu X_c X_{i_0}\right)\prod_{k=1}^{2M}\left(\id+\mu X_{i_{k-1}}X_{i_k}\right)\,.
\end{align}
In fact, $\rho^{(M)}$ is a thermal state of the classical Ising chain in the $X$-basis,
\begin{align}
    \rho^{(M)}&=\frac{1}{\mathcal{Z}}\exp[-\arctanh(\mu) H_{\mathrm{Ising}}]\,, \nonumber \\
    H_{\mathrm{Ising}}&=-X_cX_{i_0}-\sum_{k=1}^{2M}X_{i_{k-1}}X_{i_k}\,,
\end{align}
which clearly has the $\mathbb{Z}_2$ weak symmetry $Z^{\otimes(2M+2)}$. For any non-zero $\mu$, correlation functions decay exponentially as $\langle X_{i_j}X_{i_k}\rangle_{\rho^{(M)}}=\mu^{\abs{j-k}}$, and the $\mu\to 0$ limit corresponds to the infinite-temperature maximally mixed state limit. This completes the construction of the bare state $\rho^{(M)}$.

Although the bare purification already realizes the desired anomalous purification-space symmetry, its density matrix belongs to a commuting $X$-diagonal family. It is possible to deform it by acting with an invertible physical MPO commuting with $Z^{\otimes(2M+2)}$, thereby preserving both step-injectivity and also the purification MPU, while taking the density operator outside the commuting $X$-diagonal family. We simply state the existence of such an MPO in this section, but provide an explicit example in appendix~\ref{app:Z_2_sLPDO_MPO}. The MPO makes the resulting dressed state long-range correlated, which already exhibits different properties from the bare state.

\subsection{Uniform LPDO with weak $\mathbb{Z}_2$ symmetry}
For completeness, we also present an example from the PBC uniform LPDO ansatz. Even when $\rho^{(N)}(A)$ possesses an on-site weak symmetry, when the symmetry representation on the purified degrees of freedom is an MPU,
\begin{align}
    \label{eq:sym}
        &\begin{tikzpicture}[baseline={([yshift=-0.5ex]current bounding box.center)},x=\cell cm,y=\cell cm]
            \draw (2.5,0) -- (-1,0);
            \node at (3,0) {$\cdots$};
            \draw (3.5,0) -- (5.5,0);
            \draw (0,0) -- (0,2);
            \draw[dashed] (0,-1) -- (0,0);
            \draw (4.5,0) -- (4.5,2);
            \draw[dashed] (4.5,-1) -- (4.5,0);
            \draw (1.5,0) -- (1.5,2);
            \draw[dashed] (1.5,-1) -- (1.5,0);
            \MPS{0,0}{1}{blue!20}{\scriptsize $A$}
            \MPS{1.5,0}{1}{blue!20}{\scriptsize $A$}
            \MPS{4.5,0}{1}{blue!20}{\scriptsize $A$}
            \MPU{0,1.25}{1}{yellow!20}{\scriptsize $U_{\mathrm{p}}$}
            \MPU{1.5,1.25}{1}{yellow!20}{\scriptsize $U_{\mathrm{p}}$}
            \MPU{4.5,1.25}{1}{yellow!20}{\scriptsize $U_{\mathrm{p}}$}
        \end{tikzpicture}=\begin{tikzpicture}[baseline={([yshift=-0.5ex]current bounding box.center)},x=\cell cm,y=\cell cm]
            \draw (2.5,0) -- (-1,0);
            \node at (3,0) {$\cdots$};
            \draw (3.5,0) -- (5.5,0);
            \draw (0,0) -- (0,1);
            \draw[dashed] (0,-2) -- (0,0);
            \draw (4.5,0) -- (4.5,1);
            \draw[dashed] (4.5,-2) -- (4.5,0);
            \draw (1.5,0) -- (1.5,1);
            \draw[dashed] (1.5,-2) -- (1.5,0);
            \MPS{0,0}{1}{blue!20}{\scriptsize $A$}
            \MPS{1.5,0}{1}{blue!20}{\scriptsize $A$}
            \MPS{4.5,0}{1}{blue!20}{\scriptsize $A$}
            \draw (2.5,-1.25) -- (-1,-1.25);
            \node at (3,-1.25) {$\cdots$};
            \draw (3.5,-1.25) -- (5.5,-1.25);
            \MPU{0,-1.25}{1}{purple!20}{\scriptsize $U_{\mathrm{a}}$}
            \MPU{1.5,-1.25}{1}{purple!20}{\scriptsize $U_{\mathrm{a}}$}
            \MPU{4.5,-1.25}{1}{purple!20}{\scriptsize $U_{\mathrm{a}}$}
        \end{tikzpicture}\, ,
    \end{align}
a direct result is that if such an MPU $U_{\mathrm{a}}^{(N)}$ is anomalous, then the tensor $A$ that generates $\sigma^{(N)}(A)$ is non-injective\footnote{As a map from two horizontal bonds to two vertical bonds, which is different from the definition of step-injective. }~\cite{GarreRubio2023classifyingphases}, and therefore the tensor generating $\rho^{(N)}(A)$ is non-injective. This brings the possibility that $\rho^{(N)}(A)$ cannot be two-way connected to the trivial state, and may bring new mixed-state phases.

As a simple example, consider $\sigma^{(N)}\propto \id^{\otimes N} + CZX^{(N)}$ for even $N$, where $CZX^{(N)}=\prod_{i=1}^N (CZ)_{i,i+1} \prod_{i=1}^N X_i$. The density matrix $\rho^{(N)}=\sigma^{(N)} (\sigma^{(N)})^\dagger$ allows an LPDO representation. The purified state $\sigma^{(N)}$ has the following symmetry,
\begin{equation}
    Z^{\otimes N} \sigma^{(N)}=\sigma^{(N)} CZY^{(N)},
\end{equation}
just as in the last example. This clearly shows that in the case where $\rho^{(N)}$ has only a weak $\mathbb{Z}_2$ symmetry $Z^{\otimes N}$, it can still be in the non-trivial phase~\cite{lessa2025mixed,sun2025anomalous,liu2025trading}, although it is not associated with a non-trivial 2-cocycle. Denote the three basis of virtual space as $\{|I\rangle,|U,k\rangle_{k=0,1}\}$. The tensor $A$ generating $\sigma^{(N)}$ (under PBC) takes the form of
\begin{equation}
    A^{ia}=\frac{1}{\sqrt{2}}\left[\delta_{i,a}|I\rangle\langle I|+\delta_{i,\bar{a}}\sum_k (-1)^{ik}|U,i\rangle\langle U, k|\right],
\end{equation}
and one can formally write down an sLPDO representation using the procedure in Sec.~\ref{sec:def-sLPDO}. We note that the resulting tensor is not step-injective. 

One may question that in the above example, $\rho^{(N)}$ has a strong symmetry $CZX^{(N)}\rho^{(N)}=\rho^{(N)}$. To break the strong symmetry, consider the following purified state 
\begin{equation}
    \sigma^{(N)}\propto M^{\otimes N}+ Z^{\otimes N} M^{\otimes N} CZY^{(N)},\quad M = \begin{pmatrix}
        1 & \alpha\\
        \alpha & 1
    \end{pmatrix},
\end{equation}
with $\alpha>0$, which has the same weak symmetry by construction. The corresponding mixed state $\rho^{(N)}$ is
\begin{equation}
\begin{aligned}
    \rho^{(N)}&=\sigma^{(N)}(\sigma^{(N)})^\dag \\
    &\propto (M^2)^{\otimes N}+(\tilde{M}^2)^{\otimes N}\\
    &\quad +M^{\otimes N} CZX^{(N)} \tilde{M}^{\otimes N} + \tilde{M}^{\otimes N}CZX^{(N)}M^{\otimes N}
\end{aligned}
\end{equation}
where $\tilde{M}=Z M Z$. $\rho^{(N)}$ has long-range order in that its two-point correlator is $\langle X_i X_j\rangle=\frac{4\alpha^2}{(1+\alpha^2)^2}$ for any $i\neq j$ and $\langle X_i\rangle=0$. Still, one can ask whether it is in the same phase as the classical GHZ state, which we leave as an interesting open question. 

\section{Discussion and outlook}
\label{sec:conclusion}
We have taken the first steps toward understanding the structure of mixed-state tensor networks in the form of LPDOs. We introduced the sLPDO ansatz as a class of states generated sequentially by applying successive channels onto an initial state. This class contains several states of interest, including boundaries of topologically-ordered systems and certain thermal states of local Hamiltonians. 

We identified two independent sufficient conditions under which equivalent sLPDO representations can be locally related, thereby establishing fundamental theorems for sLPDOs. A key observation motivating this analysis is that two purifications of the same state are necessarily related by a global partial isometry. We studied the compatibility between this global partial isometry and the tensor network structure of the state. First, we showed that step-injectivity of the tensor allows the purification to have a tensor network left inverse. This implied that the isometry relating purifications is also itself a tensor network, thus relating the local tensors $A$ and $B$ through an MPI. Second, we identified a cyclicity condition in the virtual memory space, which is a property of the state itself and independent of the representation. The cyclic condition promises that two equivalent representations are related by an on-site isometry, which interestingly is substantially stronger than the conclusion obtained from step-injectivity.

While these results shine light on what a fundamental theorem for uniform LPDOs might involve, we showed via an example that a na\"ive extension to uniform LPDOs fails. We considered two purifications generating the same state, and proved that the unitary relating purifications has Schmidt rank growing exponentially with system size. We also raised a subtlety that arises when the purifications are not full rank: it could be that there exists a unitary connecting purifications that is an MPU, but it might not be the unique canonical one. 

Finally, we commented on what a fundamental theorem could imply for the phase classification of mixed states. A possibility missed in previous literature, the MPU connecting purifications could carry a non-trivial cocycle. We expect a systematic study of these cases to provide new, interesting classes of quantum phases. 

A complete fundamental theorem for uniform LPDOs remains an important open question. Such a theorem would require a systematic understanding of the LPDO ansatz, including the interplay between horizontal virtual and vertical purification degrees of freedom. It would also have to be compatible with the existing canonical forms and fundamental theorems for MPVs. It is unclear what the steps to tackle this question should look like. While a general route toward such a theorem is not yet apparent, the case-by-case analysis developed here identifies representative structures and obstructions that any eventual LPDO fundamental theorem must account for.

\begin{acknowledgments}
Y.Y.\@ thanks M.\@ Florido-Llinàs and G.\@ Styliaris for helpful discussions. Y.Y.\@ is grateful to E.\@ Tjoa for the suggestion to consider convex combinations of MPSs. 

Y.Y.\@ acknowledges support from the International Max Planck Research School for Quantum Science and Technology (IMPRS-QST). Y.L.\@ is supported by the Alexander von Humboldt Foundation. This research is part of the Munich Quantum Valley (MQV), which is supported by the Bavarian state government with funds from the Hightech Agenda Bayern Plus. This work has also been partially supported by the Klaus Tschira Foundation.
\end{acknowledgments}

\appendix

\section{Proof of lemma~\ref{lem:purification_partial_isometry}}
\label{app:proof}
\begin{proof}
    The reverse direction is trivial. For the forward direction, write the polar decompositions $A=\sqrt{AA^\dagger} V_A$ and $B=\sqrt{BB^\dagger} V_B$, where $V_A V_A^\dagger=\mathrm{Proj}_{\mathrm{col}(A)}$, $V_A^\dagger V_A = \mathrm{Proj}_{\mathrm{ker}(A)^\perp}$, and similarly for $V_B$. The required partial isometry is $U=V_B^\dagger V_A\in\mathcal{M}_{p_B,p_A}$, since 
    \begin{align}
        BU&=\sqrt{BB^\dagger}V_B V_B^\dagger V_A \nonumber \\
        &=\sqrt{BB^\dagger}\mathrm{Proj}_{\mathrm{col}(B)}V_A \\
        &=\sqrt{AA^\dagger}V_A=A\,.
    \end{align} 
    The partial isometry also satisfies $U^\dagger U=V_A^\dagger \mathrm{Proj}_{\mathrm{col}(B)}V_A=V_A^\dagger \mathrm{Proj}_{\mathrm{col}(A)}V_A=V_A^\dagger V_A=\mathrm{Proj}_{\mathrm{ker}(A)^\perp}$ and $UU^\dagger=V_B^\dagger V_AV_A^\dagger V_B=V_B^\dagger \mathrm{Proj}_{\mathrm{col}(A)}V_B=V_B^\dagger V_B=\mathrm{Proj}_{\mathrm{ker}(B)^\perp}$.

    For uniqueness, suppose $V_0$ is another partial isometry satisfying the properties Eq.~\eqref{eq:partial_isometry_properties}. Let $x\in \mathbb{C}^{p_A}$ and we aim to show $U_0x=V_0x$. By subtracting equations, we have that $U_0x-V_0x\in\ker B$. However, $U_0U_0^\dagger$ is the projector onto the column space of $U_0$, showing that the range of $U_0$ is $\ker(B)^\perp$, similarly for $V_0$. Clearly $\ker(B)\cap\ker(B)^\perp=\{0\}$, so $U_0x-V_0x=0$ which completes the argument.
    \end{proof}

\section{Fundamental theorem for short-range entangled mixed states}
\label{app:CCDO}
In this appendix, we consider the freedom in representation of density matrices which are generated by ensembles $\{p_k,\ket{\Psi^{(N)}(A_k)}\}_{k=1}^K$ of normal MPS, where $\ket{\Psi^{(N)}(A_k)}$ is the uniform MPS generated by a normal tensor $A_k\in\mathcal{M}_{D_k}$. That is, the density matrix is 
\begin{align}
\label{eq:convex_MPS}
    \rho^{(N)}(\{p_k,A_k\})&=\sum_{k=1}^K p_k\ketbra{\Psi^{(N)}(A_k)}{\Psi^{(N)}(A_k)} \nonumber \\
    &=\sum_{k=1}^K\ketbra{\Psi^{(N)}(A_k;\sqrt{P_k})}{\Psi^{(N)}(A_k;\sqrt{P_k})} \nonumber \\
    &=\sum_{k=1}^K\begin{tikzpicture}[baseline={([yshift=-0.5ex]current bounding box.center)},x=\cell cm,y=\cell cm]
            \draw (-1, -0.5) to [out=180,in=180] (-1,0) -- (2.5,0);
            \node at (3,0) {$\cdots$};
            \draw (3.5,0) -- (5.5,0) to [out=0,in=0] (5.5,-0.5);
            \draw (4.5,0) -- (4.5,1);
            \draw (1.5,0) -- (1.5,1);
            \MPU{0,0}{1}{yellow!20}{\scriptsize $\sqrt{P_k}$}
            \MPS{1.5,0}{1}{blue!20}{\scriptsize $A_k$}
            \MPS{4.5,0}{1}{blue!20}{\scriptsize $A_k$}
            \begin{scope}[shift={(0,-1.5)}]
                \draw (-1, -0.5) to [out=180,in=180] (-1,0) -- (2.5,0);
                \node at (3,0) {$\cdots$};
                \draw (3.5,0) -- (5.5,0) to [out=0,in=0] (5.5,-0.5);
                \draw (4.5,-1) -- (4.5,0);
                \draw (1.5,-1) -- (1.5,0);
                \MPU{0,0}{1}{yellow!20}{\scriptsize $\sqrt{P_k}$}
                \MPS{1.5,0}{1}{blue!20}{\scriptsize $\overline{A_k}$}
                \MPS{4.5,0}{1}{blue!20}{\scriptsize $\overline{A_k}$}
            \end{scope}
        \end{tikzpicture}\,,
\end{align}
where $P_k=p_k\id_{D_k}$. We make the additional mild assumption that the number of labels $K$ does not grow with system size, $K=O(1)$. This is a reasonable assumption when the tensors are homogeneous.

When each MPS tensor $A_k$ in the ensemble is normal, states of the form Eq.~\eqref{eq:convex_MPS} can be understood as mixtures of short-range entangled (SRE) states, leading to a working definition of SRE mixed states and complexity \cite{Mixed_state_complexity}. Previous definitions of SRE mixed states only included SRE pure states preparable by a constant depth unitary circuit from a product state \cite{TO_nonzero_T,Local_decoherence_separability,Symmetry_separability}, but recall that gapped SRE entangled states are only faithfully described by normal MPSs, which in general, optimally require circuit depth $\Omega(\log(N))$ to prepare \cite{MPS_prep}.

Every state Eq.~\eqref{eq:convex_MPS} can be represented trivially as a PBC LPDO with diagonal boundary. It relies on the fact that an on-site purification of $\rho^{(N)}(\{p_k,A_k\})$ exists and is $\sum_k\ket{\Psi^{(N)}(A_k;\sqrt{P_k})}\otimes\ket{k}^{\otimes N}$, which has what we call a GHZ purification space ${\ket{k}^{\otimes N}}$. The purification tensor $L^{ia}=\bigoplus_{k=1}^K A^i_k\delta_{ak}$ with boundary $\sqrt{Q}=\bigoplus_{k=1}^K\sqrt{P_k}$ generates $\rho^{(N)}(\{p_k,A_k\})$. To be explicit,
\begin{equation}
    \rho^{(N)}(\{p_k,A_k\})=\begin{tikzpicture}[baseline={([yshift=-0.5ex]current bounding box.center)},x=\cell cm,y=\cell cm]
            \draw (-1, -0.5) to [out=180,in=180] (-1,0) -- (2.5,0);
            \node at (3,0) {$\cdots$};
            \draw (3.5,0) -- (5.5,0) to [out=0,in=0] (5.5,-0.5);
            \draw (4.5,0) -- (4.5,1);
            \draw (1.5,0) -- (1.5,1);
            \draw[dashed] (1.5,0) -- (1.5,-1);
            \draw[dashed] (4.5,0) -- (4.5,-1);
            \MPU{0,0}{1}{yellow!20}{\scriptsize $\sqrt{Q}$}
            \MPS{1.5,0}{1}{blue!20}{\scriptsize $L$}
            \MPS{4.5,0}{1}{blue!20}{\scriptsize $L$}
            \begin{scope}[shift={(0,-1.5)}]
                \draw (-1, -0.5) to [out=180,in=180] (-1,0) -- (2.5,0);
                \node at (3,0) {$\cdots$};
                \draw (3.5,0) -- (5.5,0) to [out=0,in=0] (5.5,-0.5);
                \draw (4.5,-1) -- (4.5,0);
                \draw (1.5,-1) -- (1.5,0);
                \MPU{0,0}{1}{yellow!20}{\scriptsize $\sqrt{Q}$}
                \MPS{1.5,0}{1}{blue!20}{\scriptsize $\overline{L}$}
                \MPS{4.5,0}{1}{blue!20}{\scriptsize $\overline{L}$}
            \end{scope}
        \end{tikzpicture}
\end{equation}
Thus we view convex combinations of MPSs as a subclass of LPDOs.

Given the equality of LPDOs 
\begin{equation}
\label{eq:LPDO_equality}
    \rho^{(N)}(\{p_k,A_k\})=\rho^{(N)}(\{q_l,B_l\})\,, \quad \forall N\geq1\,,
\end{equation}
we ask how the ensembles $\{p_k,A_k\}$ and $\{q_l,B_l\}$ are related. We prove, in the following, that the ensembles can be related by an on-site permutation matrix with phases. On the level of the tensor, this implies the tensors are related on the purification index by an on-site unitary.

First, we say that two normal tensors $A$ and $B$ are equivalent, $A\sim B$, if they generate proportional states for all $N$, which was shown to be equivalent to the existence of a phase $\theta$ and invertible matrix $X$ such that $B^i=e^{i\theta} XA^iX^{-1}$ \cite{MPDO_RFP}. Without loss of generality, we can ensure that the $\{A_k\}_{k=1}^K$ ($\{B_l\}_{l=1}^L$) form a basis of normal tensors (BNT), that is, that $\{\ket{\Psi^{(N)}(A_k)}\}_{k=1}^K$ ($\{\ket{\Psi^{(N)}(B_l)}\}_{l=1}^L$) are asymptotically pairwise orthogonal. Thus no two distinct elements of a BNT are equivalent.

Eq.~\eqref{eq:LPDO_equality} implies that there exists a unitary $U^{(N)}$ relating the ensembles via
\begin{equation}
    \sqrt{p_k}\ket{\Psi^{(N)}(A_k)}=\sum_{l}U_{kl}^{(N)}\sqrt{q_l}\ket{\Psi^{{N}}(B_l)}\,.
\end{equation}
We will show that this unitary can be chosen as a tensor product of on-site unitaries.

Choose a label $k$ and compute
\begin{align}
    &\mel{\Psi^{(N)}(A_k)}{\rho^{(N)}}{\Psi^{(N)}(A_k)} \nonumber \\
    &=\sum_{k'=1}^K p_{k'}\times\abs{\braket{\Psi^{(N)}(A_k)}{\Psi^{(N)}(A_{k'})}}^2 \nonumber \\
    &\to p_k
\end{align} 
as $N\to\infty$. But also
\begin{align}
    &\mel{\Psi^{(N)}(A_k)}{\rho^{(N)}}{\Psi^{(N)}(A_k)} \nonumber \\
    &=\sum_{l=1}^L q_{l} \abs{\braket{\Psi^{(N)}(A_k)}{\Psi^{(N)}(B_l)}}^2\,.
\end{align}
From the above, it must be that exactly one of the $B_l$ is equivalent to $A_k$ with $p_k=q_l$. Doing this for each $k$, it cannot be that $B_l\sim A_k$ and $B_l\sim A_{k'}$ since that would imply $A_k\sim A_{k'}$, contradicting the BNT assumption. Therefore, it must be that $K=L$ and furthermore that there exists a permutation $\pi\in S_K$ such that $A_k\sim B_{\pi(k)}$ and $p_k=q_{\pi(k)}$.

In particular, one can choose the unitary $U^{(N)}_{kl}$ relating ensembles to be $U^{(N)}_{kl}=e^{-i\theta_k N}\delta_{l,\pi(k)}$, which is realized on the GHZ purification subspace as an on-site unitary $U^{(N)}=\left(\sum_{k,l}e^{-i\theta_l}\delta_{k,\pi(l)}\ketbra{k}{l}\right)^{\otimes N}$.

We comment briefly on the case where $A_k$ are allowed to be non-normal. A similar approach can be taken, but in general, there is no more control over the unitary $U^{(N)}$ and we expect a similar example to section~\ref{sec:counterexample} to exist in this context.

\section{Boundary of topological orders}
\label{app:boundary}
\subsection{Construction of the boundary state from PEPS}
\begin{proposition}
    Given a $C^*$-weak Hopf algebra $\mathcal{A}$ and two faithful $*$-representation $\phi$ and $\psi$ of $\mathcal{A}$ and $\mathcal{A}^*$. Take the canonical regular element $\omega\in\mathcal{A}^*$ and define a matrix $b(\omega)$ such that $\tr[b(\omega)\phi(x)]=\omega(x)$ for any $x\in\mathcal{A}$. Then, the rank-4 tensor
\begin{equation}
    M_{\alpha\beta}^{ij}=\sum_{a=1}^{\mathrm{dim}(\mathcal{A})} [b(\omega)\phi(e_a)]_{ij}\otimes [\psi(e^a)]_{\alpha\beta}
\end{equation}
generates a mixed state at the  renormalization fixed point. Furthermore, such a mixed state corresponds to the boundary of topological order. 
\end{proposition}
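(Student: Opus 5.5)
Two things have to be shown: that the MPO generated by $M$ is a renormalization fixed point (RFP), and that it coincides with the boundary state of the corresponding topological order. The plan is to first rewrite the $N$-site MPO as a single algebraic object and then read off both properties from weak Hopf algebra identities.

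\textbf{Step 1: global form.} Let $\{e_a\}$ be a basis of $\mathcal{A}$ and $\{e^a\}$ the dual basis of $\mathcal{A}^*$. Contracting the virtual bonds around the ring gives
\begin{equation}
\rho^{(N)}=\sum_{a_1,\dots,a_N}\tr\!\big[\psi(e^{a_N}\cdots e^{a_1})\big]\,\bigotimes_{n}b(\omega)\phi(e_{a_n}).
\end{equation}
The dual algebra $\mathcal{A}^*$ has product dual to the coproduct $\Delta$ of $\mathcal{A}$. Hence $\sum_{a}e^{a_N}\cdots e^{a_1}\otimes e_{a_N}\otimes\cdots\otimes e_{a_1}$ is the canonical element $\sum_c e^c\otimes\Delta^{(N-1)}(e_c)$. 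Therefore
\begin{equation}
\rho^{(N)}=b(\omega)^{\otimes N}\,\phi^{\otimes N}\big(\Delta^{(N-1)}(x_\psi)\big),\qquad x_\psi=\sum_c \tr[\psi(e^c)]\,e_c\in\mathcal{A}.
\end{equation}
The element $x_\psi$ is the character of $\psi$, viewed as an element of $\mathcal{A}$ through the duality.

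\textbf{Step 2: RFP property.} The RFP condition, in the sense of \cite{MPDO_RFP}, requires a pair of local channels $\mathcal{T}$ and $\mathcal{S}$. The channel $\mathcal{T}$ maps two blocked sites to one and $\mathcal{S}$ maps one back to two, with $\mathcal{T}(M M)=M$ and $\mathcal{S}(M)=MM$. I would build both from the coproduct. The map $\mathcal{S}$ is implemented by $\phi(x)\mapsto(\phi\otimes\phi)\Delta(x)$, which is a $*$-homomorphism into $\phi\otimes\phi(\Delta(1))$. The map $\mathcal{T}$ is its adjoint, i.e.\ partial traces weighted by $b(\omega)$.

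The key input is that $\omega$ is the canonical regular element. It is therefore a Haar-type (trace-like) functional: it satisfies $\omega(xy)=\omega(yx)$, it is faithful, and it is compatible with $\Delta$ via $(\omega\otimes\omega)\Delta=\omega$ on $\Delta(1)$-compressed elements. Using these properties I would check that
\begin{equation}
b(\omega)^{\otimes2}(\phi\otimes\phi)\Delta(x)=\mathcal{S}\big(b(\omega)\phi(x)\big).
\end{equation}
I would also check that $\mathcal{T}\circ\mathcal{S}=\mathrm{id}$ on the image of $\phi$. Complete positivity and trace preservation should then follow because $b(\omega)\geq0$ for a positive regular element in the $C^*$ case.

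\textbf{Step 3: identification with the boundary.} Take the $D(G)$ case, where $\mathcal{A}=\mathbb{C}[G]$ with $\Delta(g)=g\otimes g$, $\phi$ is the left regular representation $L_g$, and $\omega(g)=\delta_{g,e}$. Then $b(\omega)\propto\id$, and Step 1 yields
\begin{equation}
\rho^{(N)}\propto\sum_g \tr[\psi(\delta_g)]\,L_g^{\otimes N}.
\end{equation}
For the faithful choice of $\psi$ the weights $\tr[\psi(\delta_g)]$ are constant in $g$, which reproduces Eq.~\eqref{eq:rho_G_multiplication}. For a general weak Hopf algebra, I would cite the known correspondence between MPO-injective PEPS built from $C^*$-weak Hopf algebras and string-net/quantum-double order. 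Within that correspondence, the virtual boundary state is the projector $\phi^{\otimes N}(\Delta^{(N-1)}(\text{regular element}))$ dressed by $b(\omega)$, and this matches Step 1.

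\textbf{Main obstacle.} I expect Step 2 to be the hardest part. In the weak Hopf case $\Delta(1)\neq1\otimes1$, so the channel $\mathcal{S}$ is only unital on a compressed subspace. Verifying that $b(\omega)$ interacts correctly with $\Delta$ requires the specific properties of the canonical regular element, namely its invariance under the antipode and the relation $\omega=\sum$ of quantum dimensions times irreducible characters. My first attempt would be to verify the identity blockwise on the irreducible decomposition of $\phi$.
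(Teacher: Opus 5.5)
There is no proof in the paper to compare against. The proposition is imported from the $C^*$-weak Hopf algebra/MPO-algebra literature, and the paper only works out the specialization $\mathcal{A}=\mathbb{C}[G]$, $\phi=L$, $[\psi(g^a)]_{\alpha\beta}=\delta_{\alpha=\beta=a}$.

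\textbf{Step 1 and the boundary identification.} Your Step~1 is correct and is the most useful part. It shows that $\rho^{(N)}=b(\omega)^{\otimes N}\phi^{\otimes N}\big(\Delta^{(N-1)}(x_\psi)\big)$, so the state depends on $\psi$ through its character. With the paper's $\psi$, your $D(G)$ computation in Step~3 is right.

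However, the claim in Step~3 that a faithful $\psi$ has constant weights $\tr[\psi(\delta_g)]$ is false. Faithfulness only requires $\mathrm{rank}\,\psi(\delta_g)\geq 1$. For $G=\mathbb{Z}_2$ with $\psi(\delta_e)=\mathrm{diag}(1,0,0)$ and $\psi(\delta_g)=\mathrm{diag}(0,1,1)$ you get $\rho^{(N)}=2^{-N}(\id+2X^{\otimes N})$. This has eigenvalue $-2^{-N}$, so it is not even a density operator.

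The same conflation sits in your general case. The boundary projector is built from the Haar integral $\Omega$, while Step~1 produces $x_\psi$, so ``this matches Step 1'' is unjustified. You need an ingredient that turns $x_\psi$ into $\Omega$. Either take $\psi$ to be the regular representation of $\mathbb{C}^G$, as the paper does, or, in general, close the virtual ring with $b(\Omega)$ (which the paper computes in its example), since $\sum_c\tr[b(\Omega)\psi(e^c)]\,e_c=\Omega$.

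With that in place, the positivity you never check also follows. $\phi^{\otimes N}\Delta^{(N-1)}(\Omega)$ is a projection, because $\Omega$ is a self-adjoint idempotent and $\Delta$ is a $*$-homomorphism. Since $\omega$ is a positive trace, $b(\omega)$ can be chosen positive and central in $\phi(\mathcal{A})$, so it commutes with that projection.

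\textbf{Step 2 fails as written.} The map $\phi(x)\mapsto(\phi\otimes\phi)\Delta(x)$ is defined only on $\phi(\mathcal{A})$. It is not trace preserving: for $\phi=L$ it sends $\id_{|G|}$ to $\id_{|G|^2}$. It also contradicts the identity you want to check. For $\mathbb{C}[G]$ it sends $b(\omega)\phi(g)=L_g/|G|$ to $L_g\otimes L_g/|G|$, whereas the blocked two-site tensor contains $L_g\otimes L_g/|G|^2$. Positivity of $b(\omega)$ does not repair this normalization.

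What is missing is a genuinely trace-preserving implementation of the coproduct. For $\phi=L$, the unitary $W|x,y\rangle=|x,xy\rangle$ satisfies $W(L_g\otimes\id)W^\dagger=L_g\otimes L_g$. Then $\mathcal{S}(X)=W(X\otimes\id/|G|)W^\dagger$ and $\mathcal{T}(Y)=\mathrm{tr}_2(W^\dagger YW)$ are CPTP, with $\mathcal{S}(M)=MM$ and $\mathcal{T}(MM)=M$.

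In general no such unitary exists. The irrep multiplicities of $(\phi\otimes\phi)\Delta$ and $\phi\otimes\id$ differ already for groups with non-regular $\phi$, and for weak Hopf algebras $(\phi\otimes\phi)\Delta$ is moreover non-unital. You have to build $\mathcal{S}$ blockwise: send the irrep-$c$ block of $\phi$ into the copies of $c$ inside $(\phi\otimes\phi)\Delta$, with the weights dictated by $b(\omega)^{\otimes 2}$. Trace preservation then reduces to $\sum_{a,b}N_{ab}^c d_ad_b\propto d_c$ for the weights $d_a$ of $\omega$. This is exactly where the canonical regular element is used, and it is what your proposal leaves as ``I would check''.
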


We now apply this result to the case where $\mathcal{A}=\mathbb{C}[G]$ is the group algebra of the group $G$. We identify the algebra basis $\{e_a\}$ with the group elements $\{g_a\}$. The canonical regular element $\omega$ of $\mathcal{A}^*$ and $\Omega$ of $\mathcal{A}$ are
\begin{equation}
    \omega=g^1,\quad \Omega=\frac{1}{|G|}\sum_{g_a\in G} g_a,
\end{equation}
where $g_1$ is the group identity and $g^1$ is the corresponding element in the dual algebra. Next, choose the faithful representations as
\begin{equation}
    \phi=L,\quad [\psi(g^a)]_{\alpha\beta}=\delta_{\alpha=\beta=a},
\end{equation}
where $L$ denotes the regular representation. Both $\psi$ and $\phi$ are then $|G|$-dimensional representations. One can define $b(\Omega)$ as $\tr[b(\Omega)\psi(y)]=\Omega(y)$ for any $y\in\mathcal{A}^*$ and show a valid choice of $b(\Omega)$ is $b(\Omega)=\frac{1}{|G|}\id$. 
By definition of $b(\omega)$, a valid choice is $b(\omega)=\frac{1}{|G|}\phi(g_1)=\frac{1}{|G|}\id$. Therefore, the local MPDO tensor is
\begin{equation}
    M_{\alpha\beta}^{ij} = \frac{1}{|G|} [L(g_\alpha)]_{ij} \delta_{\alpha,\beta}. 
\end{equation}
The mixed state it generates is then
\begin{equation}
    \rho^{(N)}=\frac{1}{|G|^N}\sum_g \sum_{g_1\cdots g_N}|gg_1\cdots gg_N\rangle\langle g_1\cdots g_N|. 
\end{equation}
One can show by taking
\begin{equation}
    \sigma^{(N)}=\frac{1}{\sqrt{|G|}^N}\sum_g \sum_{g_1\cdots g_N}|gg_1\cdots gg_N\rangle\langle g_1\cdots g_N|
\end{equation}
then $\sigma^{(N)}{\sigma^{(N)}}^{\dagger} = \abs{G}\rho^{(N)}$. 

\subsection{Abelian groups have diagonal representation}
\label{app:abelian_boundary_TO}
In this section, we outline the construction of a diagonal density matrix representing the boundary mixed state of $D(G)$ topological order. We denote by $\mathbb{C}[G]$ the group algebra which is the vector space formally spanned by elements of the group $G$. Here we will sometimes explicitly label the space in which vectors live, for example that $\ket{g}_{G}\in\mathbb{C}[G]$.

For abelian $G$, every irreducible representation is one-dimensional and is one-to-one with characters $\chi$, which are group homomorphisms $\chi:G\to U(1)$. The group of characters, denoted by $\widehat{G}$, is isomorphic to the group $G$, and allows to define the group Fourier transform $U:\mathbb{C}[\widehat{G}]\to\mathbb{C}[G]$ given by $U\ket{\chi}_{\widehat{G}}=\frac{1}{\sqrt{\abs{G}}}\sum_{g}\chi(g^{-1})\ket{g}_G$. The upshot is that the multiplication operator $L_h:\mathbb{C}[G]\to\mathbb{C}[G]$ is diagonal in this basis and has action $U^\dagger L_h U\ket{\chi}_{\widehat{G}}=\chi(h)\ket{\chi}_{\widehat{G}}$.

We recall that the boundary state for general $G$ is given by
\begin{equation}
    \rho^{(N)}_G=\frac{1}{\abs{G}^N}\sum_{h\in G}L_h^{\otimes N}\,.
\end{equation}
Applying a site-wise basis change yields
\begin{align}
\label{eq:boundary_TO_diagonal}
    \tilde{\rho}^{(N)}_G&:=(U^\dagger)^{\otimes N}\rho^{(N)}_G U^{\otimes N} \nonumber \\
    &=\frac{1}{\abs{G}^N}\sum_{h\in G}\sum_{\vb*{\chi}\in\widehat{G}^N}(\chi_1\cdots\chi_N)(h)\ketbra{\chi_1\cdots\chi_N}{\chi_1\cdots\chi_N} \nonumber \\
    &=\frac{1}{\abs{G}^{N-1}}\sum_{\substack{\chi_1,\dots,\chi_N\in \widehat{G} \\ \chi_1\cdots \chi_N=1}}\ketbra{\chi_1\cdots \chi_N}{\chi_1\cdots \chi_N}\,,
\end{align}
where we used the character orthogonality relation
\begin{equation}
    \sum_{h\in G}(\chi\eta)(h)=\abs{G}\delta_{\chi\eta,1}\,.
\end{equation}

The $N$-site sLPDO $\rho^{(N)}(A;\omega)$ where $A^{ia}_{\alpha\beta}=\frac{1}{\sqrt{\abs{G}}}\delta^{i,\alpha a}\delta_{\alpha\beta}$ with $i,a,\alpha,\beta\in G$ and $\omega=\ketbra{+_G}{+_G}$ is equal to $\rho^{(N+1)}_G$. To obtain the tensor generating the diagonal state, it is not enough to apply the inverse Fourier transform to the physical index of $A$, because the final virtual degree of freedom is interpreted as a physical site in the sLPDO. In fact, the tensor $B$ generating $\tilde{\rho}^{(N+1)}_G$ is obtained from $A$ by Fourier transforming all four indices\footnote{In principle, the Kraus index doesn't need to be rotated, but doing so leads to a character-free expression for $B$.}, and also transforming the initial state to be $\tilde{\omega}=U^\dagger \omega U=\ketbra{1}{1}_{\widehat{G}}$. In components, $B^{\chi\tilde{a}}_{\tilde{\alpha}\tilde{\beta}}=\frac{1}{\abs{G}^2}\sum_{i,a,\alpha,\beta}A^{ia}_{\alpha\beta}\,\tilde{\alpha}(\alpha)\,\tilde{\beta}(\beta)\,\chi(i)\,\tilde{a}(a)$, which after simplifying yields $B^{\chi\tilde{a}}_{\tilde{\alpha}\tilde{\beta}}=\frac{1}{\sqrt{\abs{G}}}\delta^{\chi,\tilde{a}}\delta_{\tilde{\alpha},\tilde{a}^{-1}\tilde{\beta}}$. The associated channel $\mathcal{E}_{B}(\ketbra{x}{y})=\frac{1}{\abs{G}}\sum_{\chi\in\widehat{G}} \ketbra{\chi^{-1}x}{\chi^{-1}y}\otimes\ketbra{\chi}{\chi}$ updates the memory with the inverse of the emitted charge at each step which guarantees global neutrality in the combined physical and memory system. Thus we have $\rho^{(N)}(B;\ketbra{1}{1}_{\widehat{G}})=\tilde{\rho}^{(N+1)}_G$.

\section{Details on sLPDO with weak $\mathbb{Z}_2$ symmetry}
\label{app:Z_2_sLPDO}
We provide details on the calculation to show that the state introduced in section~\ref{sec:Z_2_sLPDO} satisfies the intertwiner relation Eq.~\eqref{eq:MPU_intertwiner}.

Let $A^{\{2\}}$ be as in Eq.~\eqref{eq:Z_2_sLPDO_purification_tensor}. We denote $\sigma^{(M)}(A^{\{2\}},\sqrt{\omega})$ as the purification on $M+1$ sites with physical dimension $d=4$. That means there are $2M+2$ sites with physical dimension $d=2$. Denoting $N=2M$, in the following we consider the purification as a map from the ancilla space $\ket{(a_N,a_{N-1}),\dots,(a_2,a_1),(a_0,c)}$ to physical space $\ket{(a_N,c),(a_{N-1},\dots,a_2),(a_1,a_0)}$, which we simply abbreviate as $\ket{\vb{a},c}$ and $\ket{\pi(\vb{a},c)}$, respectively. We encourage the reader to sketch out the corresponding tensor network diagram, from which these statements are obvious.

When multiplying the tensors to form the purification $\sigma^{(M)}(A^{\{2\}},\sqrt{\omega})$, the internal Hadamards and $CZ$'s cancel, leaving only a single $CZ_{c,a_N}$. Then, writing $\sigma^{(M)}(A^{\{2\}},\sqrt{\omega})=H^{\otimes(N+2)}\eta^{(M)}(A^{\{2\}},\sqrt{\omega})$, we have
\begin{equation}
    \eta^{(M)}(A^{\{2\}},\sqrt{\omega})\ket{\vb{a},c}=R_{\vb{a},c}(-1)^{c (Q(\vb{a},c)+S(\vb{a},c))}\ket{\pi(\vb{a},c)}\,,
\end{equation}
where
\begin{equation}
    R_{\vb{a},c}=\frac{1}{\sqrt{2}}\lambda_{ca_0}\prod_{j=1}^N \lambda_{a_{j-1}a_j}\,,
\end{equation}
and phases 
\begin{align}
    Q(\vb{a},c)&=ca_0+\sum_{j=1}^N a_{j-1}a_j+ca_N\,, \\
    S(\vb{a},c)&=c+\sum_{j=0}^Na_j\,.
\end{align}
These follow from collecting all the coefficients and phases from the tensor network diagram.

The intuition for such a construction can be seen when we perform an on-site spin flip, that all non-trivial transformations come from the phases. The phases will be rewritten as an operator acting on the purification degrees of freedom, giving the intertwining relation Eq.~\eqref{eq:MPU_intertwiner}. We will now explicitly show this intertwining relation.

First, we act
\begin{align}
    U_{\mathrm{a}}^{(N+2)}\ket{\vb{a},c}&=(-1)^{(N+2)/2} CZ_{c,a_0}\prod_{j=1}^N CZ_{a_{j-1},a_j}CZ_{c,a_N} \nonumber \\
    &\phantom{={}}\times Y^{\otimes(N+2)}\ket{\vb{a},c} \nonumber \\
    &=(-1)^{c+\sum_{j=0}^N a_j+\overline{c}\,\overline{a_0}+\sum_{j=1}^N\overline{a_{j-1}}\,\overline{a_j}+\overline{c}\,\overline{a_N}}\ket{\overline{\vb{a}},\overline{c}} \nonumber \\
    &=(-1)^{Q(\overline{\vb{a}},\overline{c})+S(\vb{a},c)}\ket{\overline{\vb{a}},\overline{c}}\,,
\end{align}
where $\overline{a}=a+1$. The second equality follows from the actions of $Y$ and $CZ_{i,i+1}$ on the computational basis. Then
\begin{align}
    \eta^{(M)}U^{(N+2)}_a\ket{\vb{a},c}&=R_{\overline{\vb{a}},\overline{c}}(-1)^{Q(\overline{\vb{a}},\overline{c})+S(\vb{a},c)+\overline{c}(Q(\overline{\vb{a}},\overline{c})+S((\overline{\vb{a}},\overline{c})))} \nonumber\\
    &\phantom{={}}\times\ket{\pi(\overline{\vb{a}},\overline{c})}\,.
\end{align}
We now use the properties that $R_{\overline{\vb{a}},\overline{c}}=R_{\vb{a},c}$, $Q(\overline{\vb{a}},\overline{c})=Q(\vb{a},c)$ and $S(\overline{\vb{a}},\overline{c})=S(\vb{a},c)$ when $N$ is even, which is true since $N=2M$. Thus, one has that 
\begin{align}
    \eta^{(M)}U^{(N+2)}_a\ket{\vb{a},c}&=R_{\vb{a},c}(-1)^{c(Q(\vb{a},c)+S(\vb{a},c))}\ket{\pi(\overline{\vb{a}},\overline{c})} \nonumber \\
    &=R_{\vb{a},c}(-1)^{c(Q(\vb{a},c)+S(\vb{a},c))}X^{\otimes(N+2)}\ket{\pi(\vb{a},c)} \nonumber \\
    &=X^{\otimes (N+2)}\eta^{(M)}\ket{\vb{a},c}\,.
\end{align}
Therefore, we establish the operator equality $\eta^{(M)}U^{(N+2)}_a=X^{\otimes (N+2)}\eta^{(M)}$ and recalling that $\sigma^{(M)}=H^{\otimes(N+2)}\eta^{(M)}$, it follows that 
\begin{equation}
    Z^{\otimes(N+2)}\sigma^{(M)}=(-1)^{M+1}\sigma^{(M)}CZY^{(N+2)}\,.
\end{equation}

Before we conclude, we derive the expression for the density operator~\eqref{eq:Z_2_sLPDO_state}. We have
\begin{equation}
    \eta^{(M)}(\eta^{(M)})^\dagger = \sum_{\vb{a},c} R^2_{\vb{a},c} \ketbra{\pi(\vb{a},c)}{\pi(\vb{a},c)}\,.
\end{equation}
We note that $\lambda_{xa}^2=\frac{1+\mu(-1)^{x+a}}{2}$, which gives the precisely the Ising product $(\id+Z_{i_{k-1}}Z_{i_k})$ for each adjacent physical index. Conjugating by on-site Hadamards gives the Ising product in the rotated $X$-basis.

\subsection{MPO dressing}
\label{app:Z_2_sLPDO_MPO}
In this appendix, we construct an invertible MPO which commutes with the physical symmetry $U_{\mathrm{p}}^{(N+2)}=Z^{\otimes(N+2)}$. For $k=1,\dots,N$, we define the generators 
\begin{equation}
    G_k=Z_{i_{k-1}}X_cX_{i_{k}}\,, \qquad \text{satisfying } [Z^{\otimes{N+2}},G_k]=0\,.
\end{equation}
Furthermore, the generators have the properties $G_k^2=\id$, $\{G_k,G_{k+1}\}=0$ and $[G_k,G_{k'}]=0$ for $\abs{k-k'}>1$. We introduce the MPO operator $\mathcal{M}_\alpha^{(N)}$ as a product of non-unitary operators given by
\begin{equation}
\label{eq:MPO_dressing}
    \mathcal{M}^{(N)}_\alpha=O_N(\alpha)O_{N-1}(\alpha)\cdots O_1(\alpha)\,,
\end{equation}
where
\begin{equation}
    O_k(\alpha)=\frac{e^{\alpha G_k}}{\sqrt{\cosh(2\alpha)}}\,,\qquad \alpha\in\mathbb{R}
\end{equation}
normalized appropriately as to preserve the norm of the bare state $\rho^{(M)}$ when $\mathcal{M}_\alpha^{(N)}$ acts on it. $\mathcal{M}_{\alpha}^{(N)}$ can be realized as a bond dimension $4$ MPO because it can be written as a staircase of nearest-neighbor operators.

The dressed state is given by
\begin{equation}
    \sigma^{(M)}_{\alpha}=\mathcal{M}^{(N)}_\alpha \sigma^{(M)}\,, \qquad \rho^{(M)}_{\alpha}=\mathcal{M}^{(N)}_\alpha \rho^{(M)}(\mathcal{M}^{(N)}_\alpha)^\dagger\,,
\end{equation}
which exhibits vastly different properties from the bare state $\rho^{(M)}$ which we will comment on later. Because $\mathcal{M}_\alpha^{(N)}$ commutes with $Z^{\otimes (N+2)}$, the purification $\sigma_\alpha^{(M)}$ continues to satisfy the intertwining relation~\eqref{eq:MPU_intertwiner} and the state $\rho_\alpha^{(M)}$ still possesses the weak $\mathbb{Z}_2$ symmetry.

The global MPO $\mathcal{M}^{(N)}_\alpha$ can be absorbed into a new purification tensor $\tilde{A}$, such that the pair $(\tilde{A}^{\{2\}},\sqrt{\omega})$ generates $\sigma^{(M)}_\alpha$. Locally, $\tilde{A}$ is obtained from $A$ by acting on the emitted physical index $i$ and outgoing virtual indices $(c',y)$ with the operator
\begin{equation}
    o_\alpha=\frac{e^{\alpha Z_i X_c'X_y}}{\sqrt{\cosh(2\alpha)}}\,, \qquad \tilde{A}=o_\alpha A\,.
\end{equation}
Because of the way $A$ is constructed, at the $k^{\mathrm{th}}$ sequential step, the emitted physical index carries $i_{k-1}=a_{k-1}$, while the outgoing virtual indices carry $(c,a_k)=(c,i_k)$. Thus each tensor precisely contributes a factor of $O_k(\alpha)$ to the final state. Since the action of $o_\alpha$ was only on the outgoing physical and virtual degrees of freedom, and it is invertible, the tensor $\tilde{A}^{\{2\}}$ remains step-injective.

Finally, we show that the dressed state $\rho^{(M)}_\alpha$ contains non-trivial long-range correlations, which sets it apart from the bare state $\rho^{(M)}$. The intuition comes from that each factor appearing in Eq.~\eqref{eq:MPO_dressing} contains an $X_c$ operator. Define  $Q_k=Z_{i_{k-1}}X_{i_k}$, which is a local observable in the qubit chain. It is odd under the $\mathbb{Z}_2$ symmetry $Z^{\otimes(N+2)}Q_kZ^{\otimes(N+2)}=-Q_k$, so its expectation value with the state vanishes $\expval{Q_k}_{\rho^{(M)}_\alpha}=0$. However, its two-point correlator can be calculated using the algebraic properties of the operators to be 
\begin{equation}
    \expval{Q_k Q_{k'}}_{\rho^{(M)}_\alpha}=\left(\frac{\tanh(2\alpha)}{\cosh(2\alpha)}\right)^2
\end{equation}
for non-adjacent sites $1\leq k<k'-1$, $k'\leq N-1$. This shows that long-range order persists in the state $\rho^{(M)}_\alpha$, which is not present in the bare state $\rho^{(M)}$.

\bibliography{references}

\end{document}